\documentclass[11pt]{article}

\usepackage{amsmath,amssymb,amsthm}
\usepackage{geometry}
\usepackage{hyperref}
\usepackage{booktabs}
\usepackage{graphicx}
\usepackage{tikz}
\usepackage{quantikz}
\usepackage{placeins}
\graphicspath{{figures/}}

\newcommand{\N}{\mathbb{N}}
\newcommand{\R}{\mathbb{R}}
\newcommand{\C}{\mathbb{C}}

\newcommand{\cO}{\mathcal{O}}
\newcommand{\cK}{\mathcal{K}}
\newcommand{\cH}{\mathcal{H}}
\newcommand{\cX}{\mathcal{X}}
\newcommand{\cA}{\mathcal{A}}

\newcommand{\lift}{\mathrm{lift}}
\newcommand{\spire}{\mathrm{spire}}
\newcommand{\tower}{\mathrm{tower}}
\newcommand{\Adv}{\operatorname{Adv}}
\newcommand{\dist}{\operatorname{dist}}
\newcommand{\Par}{\mathrm{Par}}
\newcommand{\Dis}{\mathrm{Dis}}

\renewcommand{\>}{\rangle}
\newcommand{\<}{\langle}
\renewcommand{\ket}[1]{|#1\>}
\renewcommand{\bra}[1]{\<#1|}
\renewcommand{\braket}[2]{\<#1|#2\>}
\renewcommand{\proj}[1]{|#1\>\<#1|}

\newtheorem{theorem}{Theorem}[section]
\newtheorem{lemma}[theorem]{Lemma}
\newtheorem{proposition}[theorem]{Proposition}
\newtheorem{corollary}[theorem]{Corollary}
\newtheorem{conjecture}[theorem]{Conjecture}
\theoremstyle{definition}
\newtheorem{definition}[theorem]{Definition}
\theoremstyle{remark}
\newtheorem{remark}[theorem]{Remark}

\title{Quantum Algorithm for Identifying Hidden Graphs:\\Spectral Theory and Numerical Evidence}
\author{Pawe{\l} Wocjan\\[4pt]
\textit{IBM Quantum, T.J.\ Watson Research Center,
Yorktown Heights, NY 10598, USA}}
\date{}

\begin{document}
\maketitle

\begin{abstract}
We give a quantum algorithm for a novel type of black-box problem: \emph{identifying} a hidden $d$-regular base graph~$G$ on $n$ vertices from oracle access to an obfuscated version of it, rather than traversing it.  From~$G$ we build the \emph{spired graph}~$G_\spire$ in three steps: each vertex is lifted into an exponentially large cluster, with the clusters of adjacent vertices joined by a random bipartite graph; each cluster is then crowned with a balanced spire; finally, all vertices are randomly relabelled to obfuscate the structure.  Specializing to~$G=K_2$ recovers the welded-trees graph.

Our quantum algorithm is conceptually simple: a continuous-time quantum walk on~$G_\spire$, followed by a single Hadamard test at a classically precomputed evolution time~$t^*$; the algorithm returns the candidate whose predicted amplitude is closest to the measurement.  Its design and analysis rest on a rigorous spectral theory: from the apex of any spire, the walk is automatically confined to a polynomial-dimensional invariant subspace on which it evolves under the adjacency matrix of a much simpler \emph{towered graph}~$G_\tower$; that matrix block-diagonalizes into $n$ independent tridiagonal systems of size $n$, each solved in closed form by a Chebyshev secular equation.  Efficient numerics enabled by this decomposition diagonalize~$G_\tower$ for any d-regular base graph~$G$, supplying~$t^*$ and the predicted return amplitudes that the algorithm needs.

Specifically, on the prism graphs~$Y_m$ versus the M\"obius ladders~$M_m$ (each on $n=2m$ vertices), the numerical study supports a precise conjecture that $\widetilde O(n^2/\log n)$ measurements at evolution time of order $m^2$ suffice to distinguish the two families; we have tested $4\le m\le 5121$ ($n$ up to $10242$).  By analogy with the welded-trees lower bounds, we further conjecture that any classical algorithm requires queries exponential in $n$.  Together these conjectures point to an exponential quantum speedup for the identification of an obfuscated base graph.
\end{abstract}


\section{Introduction}\label{sec:intro}

We introduce a novel type of problem: \emph{identifying} which graph is hidden, not merely traversing it. The algorithm is given an obfuscating oracle for a spired graph~$G_\spire$ built from a hidden $d$-regular base graph $G$ on $n$~vertices, together with a list of candidates $G_1,\ldots,G_r$, and must identify $G$. This requires extracting global spectral information about the hidden graph, qualitatively different from the vertex-to-vertex navigation that has been the focus of earlier black-box quantum-walk separations.

From a $d$-regular base graph $G$ on $n$ vertices, we build the spired graph $G_\spire$ in three steps. The construction depends on a height parameter $L\ge 1$ and a thickening parameter $c\ge 1$; we write $D=cd$. Each vertex of $G$ is replaced by a cluster of $D^L$ vertices, with the clusters of adjacent vertices joined by a random $c$-regular bipartite graph; each cluster is then crowned with a balanced $D$-ary spire of depth $L$; finally, all vertices are randomly relabelled, yielding the obfuscating oracle $\cO_G$. The height $L$ is the security parameter of the construction.

The quantum algorithm is conceptually simple: a single-time Hadamard test estimating the return amplitude of a quantum walk at a classically precomputed optimal time~$t^*$, chosen to maximise the cross-candidate distinguishability of the predicted amplitudes. The algorithm returns the candidate whose predicted amplitude is closest to the measurement.  Its design and analysis rest on a rigorous spectral theory. The walk evolves under the adjacency matrix of $G_\spire$, an exponentially large Hamiltonian accessed only through the labelling oracle in superposition; from the apex, this unitary dynamics is automatically confined to a polynomial-dimensional invariant subspace whose effective Hamiltonian is the much simpler \emph{towered graph} $G_\tower$. The classical precomputation then diagonalizes $G_\tower$ in closed form via a Chebyshev secular equation, yielding $t^*$ and the predicted amplitudes for each candidate (quantities computable from the candidate graphs alone, without any oracle access).

For our quantum evaluation we work on the prism graphs~$Y_m$ versus the M\"obius ladders~$M_m$ (each $3$-regular on $n=2m$ vertices) with height $L=n-1$. The numerical study supports a precise efficiency conjecture: $\widetilde O(n^2/\log n)$ measurements at evolution time of order $m^2$ suffice to distinguish the two families; we have tested $4\le m\le 5121$ ($n$ up to $10242$).

Our results open a new direction within a recent line of exponential black-box separations between quantum and classical computation: identification, not traversal. The welded trees problem of Childs et~al.~\cite{childs03exponential} was the original instance: a quantum walk traverses the structure in polynomial time, while any classical algorithm requires exponential queries. Balasubramanian, Li, and Harrow~\cite{balasubramanian23hierarchical} proved superpolynomial-to-exponential hitting-time speedups for quantum walks on random hierarchical graphs, and Li~\cite{li23pathfinding} gave exponential quantum speedups for pathfinding on welded-tree variants. In each of these settings the quantum algorithm \emph{reaches a destination} (an exit vertex or a hitting set) through a hidden structure. In our construction, specializing to~$G=K_2$ recovers the welded-trees graph itself.

For the classical-hardness conjecture we further specialize: we set $c=2$ with each random $2$-regular bipartite graph drawn as a uniformly random alternating Hamiltonian cycle (inheriting the welded-trees obfuscation mechanism), and place the distinguished vertex at $u=((m-1)/2,\,0)$ on the outer rail (Section~\ref{sec:placement}) to keep the differing edges of the prism vs M\"obius pair (Section~\ref{sec:Y_M}) maximally far from the algorithm's starting point. By analogy with the welded-trees lower bounds, we conjecture that any classical algorithm requires queries exponential in $n$. Together with the efficiency conjecture, this points to an exponential quantum speedup for the identification of an obfuscated base graph.

The paper is organized as follows. Section~\ref{sec:construction} defines the spired graph $G_\spire$ and the obfuscating oracle. Sections~\ref{sec:effective_krylov}--\ref{sec:root_weight} develop the spectral theory of the towered graph $G_\tower$: a polynomial-dimensional invariant subspace of the quantum walk is identified (Section~\ref{sec:effective_krylov}); the prism and M\"obius-ladder base graphs are described (Section~\ref{sec:families}); the $n^2$-dimensional eigenvalue problem on $G_\tower$ is decomposed into $n$ independent tridiagonal systems (Section~\ref{sec:spectral}), each solved in closed form by a Chebyshev secular equation (Section~\ref{sec:secular}); top weights and the return amplitude follow (Section~\ref{sec:root_weight}). Section~\ref{sec:algorithm} presents the cross-graph spectral test and its measurement-budget theorem, with the efficient classical precomputation described in Section~\ref{sec:efficient}. Sections~\ref{sec:numerics} and~\ref{sec:hardness} present the numerical evidence and the classical-hardness conjecture. Section~\ref{sec:conclusions} discusses open problems.


\section{Construction of the spired graph}\label{sec:construction}

The construction takes three inputs: a simple, connected, $d$-regular graph $G=(V, E)$ with $n=|V|$ vertices and adjacency matrix $A\in\R^{n\times n}$; a \emph{height parameter} $L\in\N$, $L\ge 1$, the height of the spire over each base-graph vertex; and a \emph{distinguished vertex} $u\in V$, the unique vertex of~$G$ at which the algorithm is given an entry point. The construction also depends on a positive integer $c\ge 1$, the \emph{thickening parameter}, and we set $D=cd$. We write $\Sigma=\{1,2,\ldots,D\}$ for an alphabet of size~$D$. Throughout the paper, $u$ denotes the distinguished vertex of $G$, while $v$ and $w$ are reserved for arbitrary base-graph vertices; we write edges as $\{v,w\}\in E$.

The construction proceeds in three steps:
\begin{itemize}
\item the \emph{lifted graph} $G_\lift$, a thickening of~$G$;
\item the \emph{spired graph} $G_\spire$, obtained from $G_\lift$ by crowning each lifted cluster with a $D$-ary spire;
\item the \emph{obfuscating oracle} $\cO_G$, which presents $G_\spire$ to any algorithm only through random labels.
\end{itemize}

\subsection{The lifted graph}\label{sec:lifted}

For each $v\in V$, write $S_v^{(L)} = \{(v, x) : x\in\Sigma^L\}$ for the \emph{cluster} of $D^L$ length-$L$ strings indexed at~$v$. The lifted graph replaces each vertex of $G$ with its cluster, and each edge with a random $c$-regular bipartite graph between the corresponding clusters.

\begin{definition}[Random $c$-regular bipartite graph $C_e$]\label{def:Ce}
For an edge $e=\{v, w\}\in E$, let $\Omega_e$ be the set of all $c$-regular bipartite graphs on $S_w^{(L)}\uplus S_v^{(L)}$. Sample $C_e$ uniformly at random from $\Omega_e$; this contributes $cD^L$ edges per original edge of~$G$. We write $\mathbf{C}=(C_e)_{e\in E}$ for the full family of sampled graphs.
\end{definition}

\begin{definition}[Lifted graph $G_\lift$]\label{def:lift}
The \emph{lifted graph} $G_\lift = (V_\lift, E_\lift)$ has vertex set $V_\lift = \biguplus_{v\in V} S_v^{(L)}$ and edge set $E_\lift = \biguplus_{e\in E} E(C_e)$. It is $D$-regular.
\end{definition}

The cluster $S_v^{(L)}$ should be thought of as a \emph{thickened} version of vertex~$v$: the $d$ original neighbours of $v$ in $G$ contribute $d$ random $c$-regular bipartite graphs attached at the cluster, each giving every vertex in $S_v^{(L)}$ exactly $c$ neighbours in the partner cluster, for a total of $D=cd$ neighbours per cluster vertex.

The natural projection $\pi:V_\lift\to V$, mapping every vertex $(v,x)\in S_v^{(L)}$ to~$v$, is a graph homomorphism. At $c=1$, $C_e$ is a uniformly random perfect matching $S_v^{(L)}\leftrightarrow S_w^{(L)}$ and $\pi$ becomes a local isomorphism, exhibiting $G_\lift$ as a $d^L$-fold random covering of $G$ in the sense of algebraic graph theory~\cite{godsil2001algebraic}. For $c\ge 2$, $\pi$ fails to be a local isomorphism (each cluster vertex has $c$ neighbours in $S_w^{(L)}$ while $v$ has only one neighbour~$w$), and $G_\lift$ is more accurately described as a \emph{thickening} of $G$ in which each base-graph edge~$e$ is replaced by the $c$-regular bipartite graph $C_e$ on $2D^L$ vertices.

\subsection{Adding the spires}\label{sec:spires}

To obtain the spired graph $G_\spire$, we crown each cluster $S_v^{(L)}$ of $G_\lift$ with a spire defined using the prefix structure of strings over~$\Sigma$.

\begin{definition}[Spire $G_v$]\label{def:spire}
The spire $G_v = (S_v, E_v)$ at vertex $v$ is the graph whose vertex set is the disjoint union of $L+1$ level sets,
\[
S_v \;=\; \biguplus_{\ell=0}^L S_v^{(\ell)},
\qquad S_v^{(\ell)} \;=\; \{(v, s) : s\in\Sigma^\ell\},
\]
where $\Sigma^0=\{\epsilon\}$ consists of the empty string. The level $\ell=0$ is a singleton, $S_v^{(0)}=\{(v,\epsilon)\}$, and we call $a_v=(v,\epsilon)$ the \emph{apex} of the spire. The level $\ell=L$ coincides with the cluster $S_v^{(L)}$ already present in $G_\lift$, and we call this set the \emph{foundation} of the spire. The edge set $E_v$ is given by the \emph{prefix property}: for each level $\ell=0,\ldots,L-1$, each vertex $(v,s)\in S_v^{(\ell)}$ is connected to the $D$ vertices $(v, s\cdot\alpha)\in S_v^{(\ell+1)}$ ($\alpha\in\Sigma$).
\end{definition}

As a graph, $G_v$ is a perfectly balanced $D$-ary tree of depth~$L$, but \emph{inverted}: the apex $a_v$ is the unique root and protrudes outward as the only addressable entry point, while the foundation $S_v^{(L)}$ is the set of leaves and forms the base on which the spire stands.

\begin{definition}[Spired graph $G_\spire$]\label{def:spired}
The \emph{spired graph} $G_\spire$ is obtained by crowning each cluster in $G_\lift$ with its corresponding spire:
\[
G_\spire \;=\; \biggl(\biguplus_{v\in V} S_v,\;\biguplus_{v\in V} E_v\;\uplus\;E_\lift\biggr).
\]
\end{definition}

The spire acts as a tool that hides the foundation $S_v^{(L)}$ behind its single apex, addressable from outside as the only entry point of the spire. Internal vertices of the spire each have degree $D+1$ inside the spire alone, while the apex has degree~$D$ since it has no parent; foundation vertices have degree $D+1$ in $G_\spire$, contributed by one parent edge upward into the spire and the $D$ edges from the $d$ alternating cycles incident at $S_v^{(L)}$.

The height $L$ controls the size of the spired graph:
\begin{align*}
|V_\spire| = n\sum_{\ell=0}^L D^\ell = n\,\frac{D^{L+1}-1}{D-1},
\end{align*}
exponential in~$L$.

The construction is illustrated in Figures~\ref{fig:towered} and~\ref{fig:spired}. Figure~\ref{fig:spired} draws the random cycle only between the bottom and right spires, since including all random cycles would make the figure too crowded. We use $L$ to denote the length of the attached paths in the towered-graph figure.

\subsection{The obfuscating oracle}\label{sec:oracle}

The spired graph $G_\spire$ is not given explicitly. Instead, the algorithm receives an obfuscating oracle $\cO_G$ that exposes $G_\spire$ only through random labels.

\begin{definition}[Obfuscating oracle $\cO_G$]\label{def:oracle}
Let $\cX=\{0,1\}^a$ be a set of $a$-bit labels, with $a = 2 \lceil \log_2 |V_\spire| \rceil = O(L \log D + \log n)$, so that the probability that a uniformly random $a$-bit string falls inside the image $\iota(V_\spire)$ is at most $|V_\spire|/2^a \le 1/|V_\spire|$, exponentially small in $L$. Sample an injective labelling $\iota:V_\spire\to\cX$ uniformly at random, independently of $\mathbf{C}$. Given a label $x=\iota(w)\in\iota(V_\spire)$, the oracle returns the \emph{set}
\[
\cO_G(x) \;=\; \bigl\{\iota(y) : y\in N_{G_\spire}(w)\bigr\}
\]
of labels of neighbours of~$w$. On any label outside $\iota(V_\spire)$ the oracle returns $\bot$. The algorithm receives the seed label $\iota(a_{u})$ of the apex of the spire over the distinguished vertex as input.
\end{definition}

Three points deserve emphasis. First, the labelling $\iota$ is injective, so distinct vertices have distinct labels and the response $\cO_G(x)$ is genuinely a set rather than a multiset. Second, the oracle exposes no further positional or type information beyond the obvious degree distinction at the apices: every vertex has degree $D+1$ except the $n$ apices, which have degree~$D$. The algorithm cannot tell from a label whether the corresponding vertex is an internal spire vertex or a foundation vertex, nor which spire or which foundation a vertex belongs to. Third, the structural information about $G$ is doubly hidden, by the random sampling of $\mathbf{C}$ and by the uniformly random labelling $\iota$: in particular, $\iota$ destroys the prefix structure that would otherwise identify the level $\ell$ of a spire vertex from its $\Sigma$-string coordinates.

Because the labelling $\iota$ randomises away every distinguishing feature of $V_\spire$ except the apex degree, the only structural information available to the algorithm is the value of $L$ (and the apex count $n$). The height $L$ thus functions as the natural security parameter of the construction: any classical algorithm that recovers~$G$ from $\cO_G$ is conjectured to require a number of queries exponential in~$L$, with a base that grows with the thickening~$c$ (Section~\ref{sec:hardness}).

In the security analysis (this section and Section~\ref{sec:hardness}), we specialize to $c=2$ and require each $C_e$ to be a uniformly random \emph{alternating Hamiltonian cycle} on $S_v^{(L)}\uplus S_w^{(L)}$, a strict subset of the $2$-regular bipartite graphs (which generically split into a disjoint union of even cycles). At $G=K_2$, $d=1$, the spired graph then specializes exactly to the welded-trees graph of Childs et al.~\cite{childs03exponential}: two binary trees joined at their leaves by an alternating Hamiltonian cycle.

This conservative choice inherits the welded-trees obfuscation mechanism (random alternating cycles plus random labelling), whose classical opacity Childs et al.\ and Fenner--Zhang~\cite{fenner03lowerbound} established for the traversal version of the problem; the formal comparison and the resulting classical-hardness conjecture are taken up in Section~\ref{sec:hardness}.

In the quantum setting, the oracle is implemented as a unitary that maps $\ket{x}\ket{0}\mapsto\ket{x}\ket{\cO_G(x)}$ in superposition. This enables simulation of the spired-graph Hamiltonian $H_\spire=\frac{1}{c}A_\spire$ as a sparse Hamiltonian (maximum degree $D{+}1$), using standard sparse Hamiltonian simulation techniques~\cite{berry15hamiltonian}. The simulation cost per unit time is polylogarithmic in the precision and polynomial in the degree and the norm.

\begin{figure}[ht]
  \centering
  \includegraphics[width=0.8\textwidth]{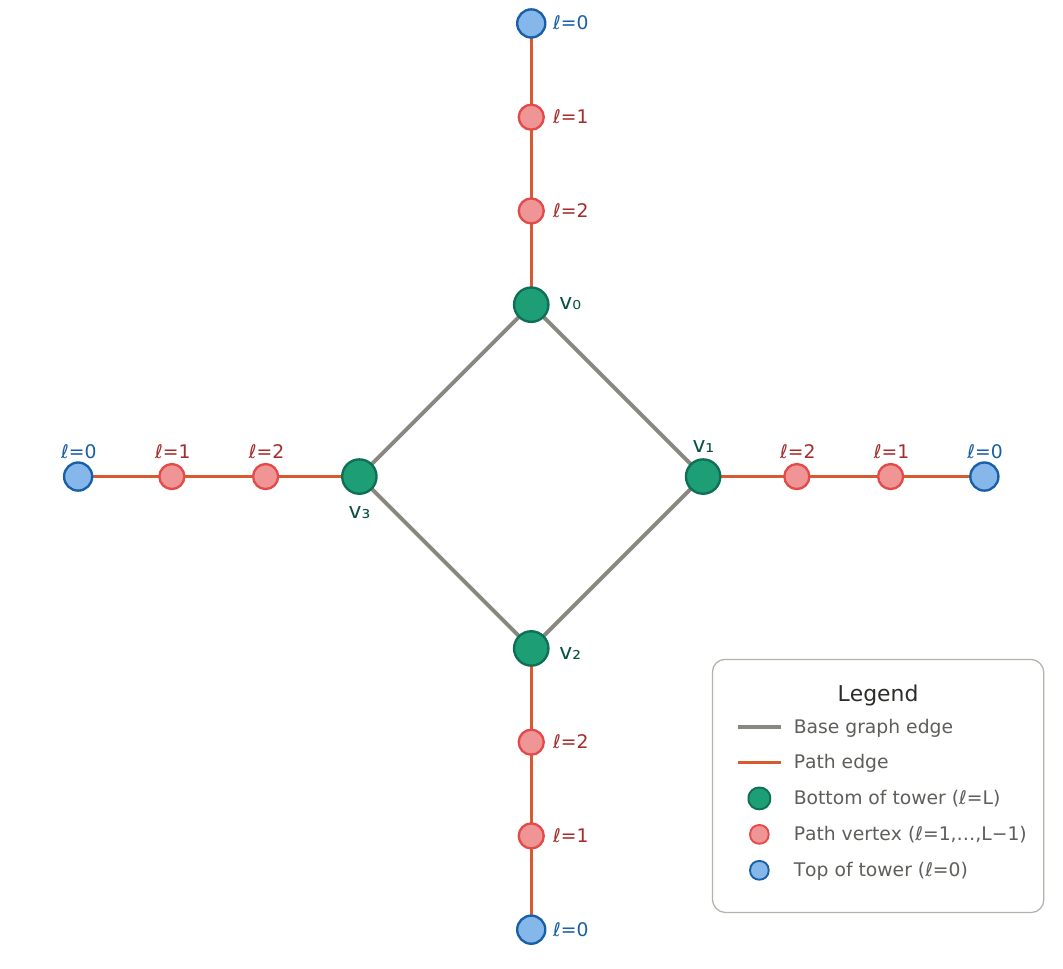}
  \caption{The towered graph $G_\tower$ for the cycle $C_4$ as the base graph, with path length $L=3$.
  The four vertices ($v_0,v_1,v_2,v_3$) are the bottoms of the towers ($\ell=L$), each identified with a vertex of the base cycle.
  The red vertices are interior path vertices ($\ell=1,\ldots,L{-}1$), and the blue vertices are the tops of the towers ($\ell=0$), where the quantum walk starts.}
  \label{fig:towered}
\end{figure}
 
\begin{figure}[ht]
  \centering
  \includegraphics[width=\textwidth]{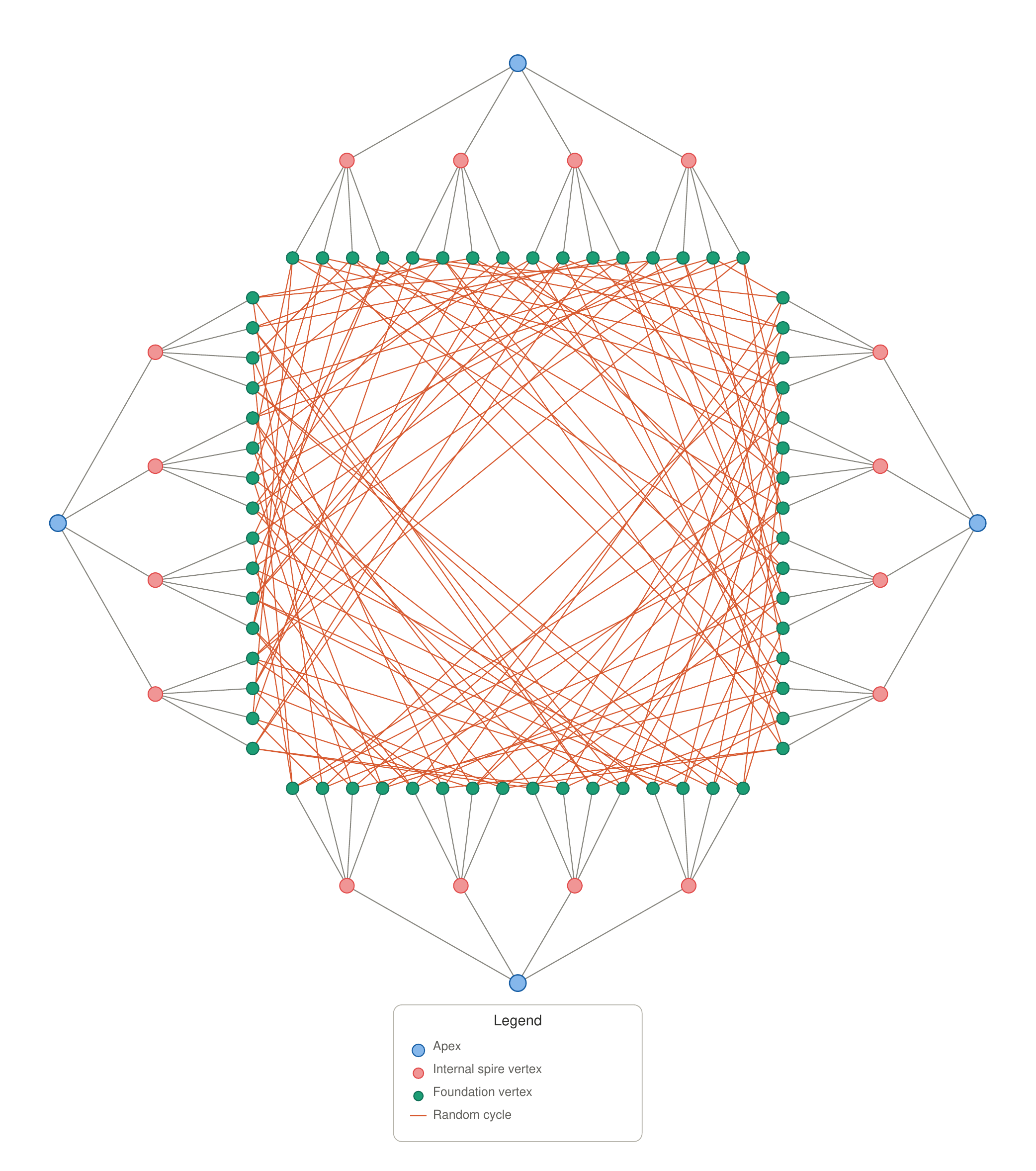}
  \caption{The spired graph $G_\spire$ for the cycle $C_4$ as the base graph, with degree $d=2$, thickening $c=2$, $D=cd=4$, and height $L=2$ (so each spire has one apex, four internal vertices, and a foundation of sixteen vertices).
  The foundations $S_v^{(L)}$ (green) face inward; the apices (blue) point outward.
  A random alternating Hamiltonian cycle $C_e$ (the $c=2$ specialization of Definition~\ref{def:Ce}; see Section~\ref{sec:oracle}) is attached to each of the four edges of $C_4$; all four are drawn in a single colour, so the foundation-to-foundation edges form a single visual tangle rather than four legible cycles.}
  \label{fig:spired}
\end{figure}

\FloatBarrier


\section{The effective Krylov Hamiltonian}\label{sec:effective_krylov}

The spired graph $G_\spire$ has exponentially many vertices, making direct analysis intractable. However, the quantum walk started from an apex evolves within a \emph{polynomially-dimensional invariant subspace}, whose structure we now describe.

\subsection{Level states and the Krylov subspace}\label{sec:level_states}

Define the Hamiltonian of the spired graph as $H_\spire = \frac{1}{c}A_\spire$. Throughout this section we work with the starting state $\ket{a_{u}} = \ket{(u,\epsilon)}$, where $u\in V$ is the distinguished vertex of Section~\ref{sec:construction} and $a_{u}$ is the apex of the spire $G_u$.

\begin{definition}[Level states]\label{def:level_state}
For each $v\in V$ and $\ell\in\{0,\ldots,L\}$, the \emph{level state} is the uniform superposition over all vertices at level~$\ell$ of the spire:
\[
\ket{S_v^{(\ell)}} = \frac{1}{\sqrt{D^\ell}}\sum_{x\in\Sigma^\ell}\ket{(v,x)}.
\]
In particular, $\ket{S_v^{(0)}}=\ket{(v,\epsilon)}$ is the apex.
\end{definition}

The level states span an $n(L{+}1)$-dimensional subspace of the exponentially large Hilbert space. The crucial observation is that this subspace is invariant under $H_\spire$.

\begin{definition}[Krylov subspace]\label{def:krylov}
The \emph{Krylov subspace} of the starting state $\ket{(u,\epsilon)}$ is
\[
\cK = \mathrm{span}\{A_\spire^j\ket{(u,\epsilon)} : j=0,1,2,\ldots\}.
\]
\end{definition}

\subsection{Action of the Hamiltonian on level states}\label{sec:action}

The adjacency matrix of $G_\spire$ decomposes as $A_\spire = A_{\lift}+\sum_{v\in V}A_v$, where $A_{\lift}$ is the adjacency matrix of the lifted graph and $A_v$ is the adjacency matrix of the spire $G_v$. Due to the regularity of the lifted graph and the perfect balance of the spires, the level states transform in a particularly simple way.

\begin{lemma}[Action on level states]\label{lem:action}
For all $v\in V$ and $\ell\in\{0,\ldots,L\}$:
\begin{align}
A_{\lift}\ket{S_v^{(L)}} &= c\sum_{\{v, w\}\in E}\ket{S_w^{(L)}}, &
A_{\lift}\ket{S_v^{(\ell)}} &= 0 \quad\text{for }\ell<L,
\label{eq:Alift_action}\\
A_v\ket{S_v^{(\ell)}} &= \sqrt{D}\bigl(\ket{S_v^{(\ell-1)}}+\ket{S_v^{(\ell+1)}}\bigr), &
A_v\ket{S_w^{(\ell)}} &= 0 \quad\text{for }v\neq w,
\label{eq:Aspire_action}
\end{align}
where we set $\ket{S_v^{(-1)}}=\ket{S_v^{(L+1)}}=0$.
\end{lemma}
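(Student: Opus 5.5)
The plan is to compute each action by expanding the level state into its basis vectors, applying the adjacency matrix vertex by vertex, and regrouping the result by counting how many times each target vertex is hit. Everything reduces to regularity: every vertex of a given level has the same number of neighbours in each adjacent level or partner cluster, and every target vertex receives the same number of edges from the source level.

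For $A_\lift$, first observe that $E_\lift$ only contains edges between foundation vertices. Hence for $\ell<L$ every vertex $(v,x)$ with $|x|=\ell$ is isolated in $G_\lift$, and $A_\lift\ket{S_v^{(\ell)}}=0$ immediately. For $\ell=L$ we write
\[
A_\lift\ket{S_v^{(L)}}=D^{-L/2}\sum_{x\in\Sigma^L}\sum_{e\ni v}\sum_{y:\,((v,x),y)\in C_e}\ket{y}.
\]
We split by edges $e=\{v,w\}$. Since $C_e$ is $c$-regular and bipartite, each vertex $(w,y)\in S_w^{(L)}$ has exactly $c$ neighbours in $S_v^{(L)}$. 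The inner double sum over $x$ and over neighbours in $C_e$ therefore equals $c\sum_{y\in\Sigma^L}\ket{(w,y)}$. Normalising gives $c\ket{S_w^{(L)}}$, because both clusters have the same size $D^L$. The only subtlety is that $C_e$ has no edges inside a cluster: it is bipartite between $S_v^{(L)}$ and $S_w^{(L)}$, and $G$ is simple, so no loops arise.

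For $A_v$, the second identity is trivial: $A_v$ is supported on $S_v$, and $S_v\cap S_w=\emptyset$ for $v\ne w$. For the first identity, we apply $A_v$ to $D^{-\ell/2}\sum_{|s|=\ell}\ket{(v,s)}$, and the neighbours of $(v,s)$ in the tree split into two kinds.
\begin{itemize}
\item Children: the $D$ vertices $(v,s\alpha)$, present when $\ell<L$. Each string of length $\ell+1$ arises exactly once, so the children sum is $D^{-\ell/2}\sum_{|t|=\ell+1}\ket{(v,t)}=\sqrt D\,\ket{S_v^{(\ell+1)}}$.
\item The parent: the vertex $(v,s')$ with $s=s'\alpha$, present when $\ell>0$. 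Each string of length $\ell-1$ is the parent of exactly $D$ strings, so the parent sum is $D\cdot D^{-\ell/2}\sum_{|t|=\ell-1}\ket{(v,t)}=\sqrt D\,\ket{S_v^{(\ell-1)}}$.
\end{itemize}

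The boundary conventions $\ket{S_v^{(-1)}}=\ket{S_v^{(L+1)}}=0$ encode that the apex has no parent and foundation vertices have no children inside $G_v$. I expect no real obstacle; the only care needed is the counting in the multiplicities (the factor $D$ from parents versus $1$ from children) and the normalisation exponents, which produce the symmetric factor $\sqrt D$.
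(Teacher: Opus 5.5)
Your proof is correct and follows the same route as the paper: you expand the level states and use the $c$-regularity of each bipartite graph $C_e$ together with the prefix (balanced $D$-ary) structure of the spire, with the boundary conventions covering the apex and the foundation. Your explicit in-multiplicity counting (each vertex of $S_w^{(L)}$ is hit $c$ times, each parent is hit $D$ times) is, if anything, a slightly more complete justification of the vector identity than the paper's matrix-element normalisation argument.
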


\begin{proof}
For the lifted graph: each vertex in $S_v^{(L)}$ has exactly $c$ neighbours in $S_w^{(L)}$ for each edge $\{v, w\}\in E$ (from the random $c$-regular bipartite graph $C_e$ between $S_v^{(L)}$ and $S_w^{(L)}$). The uniform superposition $\ket{S_v^{(L)}}$ therefore maps to $c\sum_w\ket{S_w^{(L)}}$, where the factor $c$ arises from the $c$-regularity of $C_e$. For $\ell<L$, the vertices in $S_v^{(\ell)}$ belong to the tree interior and have no lifted-graph edges, giving zero.

For the tree adjacency: a vertex $(v,x)\in S_v^{(\ell)}$ with $0<\ell<L$ has one parent in $S_v^{(\ell-1)}$ and $D$ children in $S_v^{(\ell+1)}$. The normalization factors in the level states produce: the parent contributes $D^{-(\ell-1)/2}$ (from $\ket{S_v^{(\ell-1)}}$) with a factor $D^{\ell/2}$ (from dualizing $\bra{S_v^{(\ell)}}$), giving $\sqrt{D}$; the children similarly contribute $\sqrt{D}$. The boundary cases $\ell=0$ (no parent) and $\ell=L$ (no children) give the convention $\ket{S_v^{(-1)}}=\ket{S_v^{(L+1)}}=0$.
\end{proof}

\subsection{The towered graph}\label{sec:dec}

The structural data of the restriction (which level states couple to which, with what weights) collects naturally into a small graph that we call the towered graph.

\begin{definition}[Towered graph $G_\tower$]\label{def:tower}
The \emph{towered graph} $G_\tower$ has:
\begin{itemize}
\item Vertex set $V_\tower = \{(v,\ell): v\in V,\,\ell=0,\ldots,L\}$.
\item \emph{Top:} $(v,0)$, the top of the tower at $v$.
\item \emph{Bottom:} $(v,L)$, identified with the base-graph vertex $v$.
\item Tower edges $\{(v,\ell),(v,\ell{+}1)\}$ with weight $\gamma=\sqrt{D}/c=\sqrt{d/c}$ for $\ell=0,\ldots,L{-}1$.
\item Base-graph edges $\{(v,L),(w,L)\}$ with weight $1$ for $\{v, w\}\in E$.
\end{itemize}
At $c=2$, $\gamma=\sqrt{d/2}$.
\end{definition}

Write $A_\tower$ for the (weighted) adjacency matrix of $G_\tower$.

\begin{theorem}[Restriction to the Krylov subspace]\label{thm:restriction}
The Krylov subspace $\cK$ is spanned by the level states $\{\ket{S_v^{(\ell)}}:v\in V,\,\ell=0,\ldots,L\}$ and has dimension $n(L{+}1)$. The restriction of $H_\spire=\frac{1}{c}A_\spire$ to $\cK$ (the \emph{effective Krylov Hamiltonian} $H_\spire|_\cK$) equals the towered-graph adjacency matrix:
\[
H_\spire|_\cK \;=\; A_\tower,
\]
where the identification of $\cK$ with $\C^{V_\tower}$ is via $\ket{S_v^{(\ell)}}\leftrightarrow\ket{(v,\ell)}$.
\end{theorem}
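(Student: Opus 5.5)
The plan is to work with the span $\mathcal{W}=\mathrm{span}\{\ket{S_v^{(\ell)}}:v\in V,\ 0\le\ell\le L\}$ rather than with $\cK$ directly: first show $\mathcal{W}$ is $H_\spire$-invariant and compute the restricted matrix, then relate $\mathcal{W}$ to $\cK$.

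First, the level states form an orthonormal family. Distinct pairs $(v,\ell)$ are supported on disjoint vertex sets $S_v^{(\ell)}$, so they are orthogonal, and each is normalized by construction. Hence $\dim\mathcal{W}=n(L+1)$, and $\ket{S_v^{(\ell)}}\leftrightarrow\ket{(v,\ell)}$ is an isometric identification $\mathcal{W}\cong\C^{V_\tower}$.

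Next comes invariance and the matrix of $H_\spire|_{\mathcal{W}}$. Write $A_\spire=A_\lift+\sum_{v'}A_{v'}$ and apply Lemma~\ref{lem:action} term by term. For $\ell<L$, only the $A_v$ term survives, giving
\[
H_\spire\ket{S_v^{(\ell)}}=\tfrac{\sqrt D}{c}\bigl(\ket{S_v^{(\ell-1)}}+\ket{S_v^{(\ell+1)}}\bigr).
\]
For $\ell=L$, the $A_\lift$ term also contributes $\tfrac1c\cdot c\sum_{\{v,w\}\in E}\ket{S_w^{(L)}}=\sum_{w\sim v}\ket{S_w^{(L)}}$. In both cases the image lies in $\mathcal{W}$. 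The coefficients are exactly $\gamma=\sqrt D/c$ on tower edges and $1$ on base-graph edges between bottoms, with the boundary conventions $\ket{S_v^{(-1)}}=\ket{S_v^{(L+1)}}=0$ matching the absence of edges above the top and below the bottom within a tower. Since the basis is orthonormal, the matrix of $H_\spire|_{\mathcal{W}}$ in it is therefore $A_\tower$ entrywise. Its symmetry is a useful consistency check: the $(v,L)\to(w,L)$ entry is $1$ in both directions because $C_e$ is $c$-regular on both sides.

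Third, I relate $\cK$ to $\mathcal{W}$. Since $\ket{(u,\epsilon)}=\ket{S_u^{(0)}}\in\mathcal{W}$ and $\mathcal{W}$ is invariant, induction on $j$ gives $A_\spire^j\ket{(u,\epsilon)}\in\mathcal{W}$, so $\cK\subseteq\mathcal{W}$. Because $\cK$ is itself invariant, the restriction of $H_\spire$ to $\cK$ is then the corresponding compression of $A_\tower$, and the walk from the apex is governed entirely by $A_\tower$. This is all the later sections use, since the return amplitude is $\bra{(u,0)}e^{-itA_\tower}\ket{(u,0)}$.

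The main obstacle is the reverse inclusion $\mathcal{W}\subseteq\cK$, which is what the dimension claim $n(L+1)$ requires. It is equivalent to $\ket{(u,0)}$ being a cyclic vector for $A_\tower$: every eigenvalue must be simple, and every eigenvector must have nonzero $(u,0)$-component. My first attempt would be a triangularity argument. Along the tower at $u$, $A_\tower^{j}\ket{(u,0)}$ first reaches $(u,j)$ at step $j$ with coefficient $\gamma^j\neq0$, which produces the $L+1$ states of that tower. Continuing, the bottoms $(w,L)$ are reached in order of their base-graph distance from $u$, and then the towers above them are climbed.

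I expect this argument to break exactly when $G$ has a nontrivial automorphism fixing $u$. Such an automorphism acts on $G_\tower$ and fixes $\ket{(u,0)}$, so $\cK$ lies in its fixed subspace, which is proper. For example, the prism $Y_m$ has a reflection fixing any vertex. So I would either add the hypothesis that $\ket{(u,0)}$ is cyclic for $A_\tower$ (for instance, that $\mathrm{Aut}(G)_u$ is trivial and $A_\tower$ has simple spectrum), or read the theorem's first sentence as ``$\cK$ is contained in the span of the level states,'' with equality under cyclicity. Under that reading the identity $H_\spire|_{\mathcal{W}}=A_\tower$ proved above is the substantive content and holds unconditionally.
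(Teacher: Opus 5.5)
Your proposal is correct, and where it departs from the paper the gap is in the paper's proof, not in yours. Your invariance computation and matrix entries match the paper's: Lemma~\ref{lem:action} applied term by term, then dividing by $c$. Your inclusion $\cK\subseteq\mathcal{W}$ is also the paper's first step.

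For the reverse inclusion, the paper only says that every level state ``can be reached by iterated application'': tower edges join adjacent levels and $G_\lift$ joins the bottoms. That shows each $\ket{S_v^{(\ell)}}$ lies in the \emph{support} of some $A_\spire^j\ket{a_u}$. It does not show that these vectors \emph{span} $\mathcal{W}$. Your triangularity argument locates the failure. It works along the tower at $u$, but it loses linear independence once several bottoms $(w,L)$ are first reached at the same step. So the dimension claim $n(L{+}1)$ cannot be proved as stated. Your weakened reading ($\cK\subseteq\mathcal{W}$ together with $H_\spire|_{\mathcal{W}}=A_\tower$) is all that the later sections use.

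Two refinements of your diagnosis.

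First, a nontrivial automorphism fixing $u$ is a sufficient cause of failure, but not ``exactly'' the obstruction. Let $E_i$ be the spectral projections of $A$. Proposition~\ref{prop:block} gives $\ket{(u,0)}=\sum_i E_i\ket{u}\otimes\ket{0}$, so $\cK$ lies in $\bigoplus_{i:\,E_i\ket{u}\neq 0}\mathrm{span}\{E_i\ket{u}\}\otimes\C^{L+1}$. In fact it equals this space, for two reasons:
\begin{itemize}
\item $\ket{0}$ is cyclic for each $H^{(\mu_i)}$, by your triangularity argument.
\item Channels with distinct $\mu$ have disjoint spectra, because an eigenvalue forces $\mu=\gamma U_{L+1}(x)/U_L(x)$, and consecutive $U$'s have no common zero.
\end{itemize}
Hence
\[
\dim\cK=(L+1)\,\dim\mathrm{span}\{A^j\ket{u}: j\ge 0\}.
\]
So $\cK=\mathcal{W}$ if and only if $\ket{u}$ is cyclic for $A$, i.e.\ $A$ has simple spectrum and no eigenvector of $A$ vanishes at $u$. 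A degenerate spectrum with no automorphism, or an eigenvector vanishing at $u$, also breaks equality. For the same reason, your suggested sufficient condition (trivial $\mathrm{Aut}(G)_u$ plus simple spectrum of $A_\tower$) is not enough.

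Second, for $Y_m$ and $M_m$ the failure is blunter than a reflection. Both have eigenvalues of multiplicity $2$ (indices $j$ and $m-j$), so $A_\tower$ is degenerate and has no cyclic vector at all. Vertex-transitivity then gives $\dim\cK=r(L+1)$, where $r\le n/2+2$ is the number of distinct eigenvalues of $A$. For example, $Y_4=Q_3$ gives $\dim\cK=4(L+1)$ rather than $8(L+1)$. This agrees with the sum over $i=1,\ldots,r$ in Proposition~\ref{prop:return}.
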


\begin{proof}
The level state $\ket{S_{u}^{(0)}}=\ket{(u,\epsilon)}$ is the starting state, so it belongs to $\cK$. By Lemma~\ref{lem:action}, applying $A_\spire$ to any level state produces a linear combination of level states, so $\cK\subseteq\mathrm{span}\{\ket{S_v^{(\ell)}}\}$. Since $\ket{(u,\epsilon)}\in\cK$ and the level states at all levels and vertices can be reached by iterated application (the tower edges connect adjacent levels, and the lifted graph connects different vertices at level~$L$), the Krylov subspace equals the full span.

For the effective Krylov Hamiltonian: using $H_\spire=\frac{1}{c}A_\spire = \frac{1}{c}(A_{\lift}+\sum_v A_v)$ and Lemma~\ref{lem:action}, the matrix elements in the level-state basis are:
\begin{align*}
\bra{S_v^{(\ell)}}H_\spire\ket{S_v^{(\ell\pm 1)}} &= \frac{\sqrt{D}}{c} = \gamma,\\
\bra{S_v^{(L)}}H_\spire\ket{S_w^{(L)}} &= \frac{c}{c}\cdot[\{v, w\}\in E] = [\{v, w\}\in E],
\end{align*}
and all other matrix elements vanish. The tower-edge weight may be rewritten as $\gamma=\sqrt{D}/c=\sqrt{cd}/c=\sqrt{d/c}$. These are exactly the matrix entries of $A_\tower$ in the basis $\{\ket{(v,\ell)}\}$.
\end{proof}

\begin{remark}[Terminology]\label{rem:terminology}
Two pairs of names refer to the corresponding positions of $G_\spire$ and $G_\tower$: on the spire side, the \emph{apex} $a_v=(v,\epsilon)$ (a single vertex of $G_v$) and the \emph{foundation} $S_v^{(L)}$ (the cluster of $D^L$ leaves of $G_v$); on the tower side, the \emph{top} $(v,0)$ and the \emph{bottom} $(v,L)\equiv v$. The path Hamiltonian $H^{(\mu)}$ of Section~\ref{sec:secular} acquires a top condition and a bottom condition at the tower endpoints. The architectural correspondence between the two pairs is taken up in Section~\ref{sec:architecture}.
\end{remark}

For our algorithm, we set $L=n-1$, so each spire has height $L$ and each path of $G_\tower$ has $L$ edges (and $n$ vertices), giving $|V_\tower|=n^2$.

\subsection{Architectural visualization}\label{sec:architecture}

Picture the base graph $G$ in a horizontal plane with a tower (a path of length $L$) pointing vertically upward over each vertex; this is $G_\tower$. Now picture the lifted graph $G_\lift$ in the horizontal plane with a spire (an inverted $D$-ary tree of depth $L$) pointing upward over each cluster; this is $G_\spire$. The towered graph is the simplified version of the spired graph: $G_\tower$ is the natural quotient of $G_\spire$ under the level-set partition, and Theorem~\ref{thm:restriction} records this quotient spectrally as $H_\spire|_\cK = A_\tower$.

The two graphs are related by the \emph{quotient projection} $\pi : G_\spire \to G_\tower$ induced by the equitable partition of $V_\spire$ into the level sets $\{S_v^{(\ell)} : v\in V,\,\ell=0,\dots,L\}$ (cf.~\cite{godsil2001algebraic}, ch.~9). Each spire projects to its tower; the apex $a_v$ to the top $(v,0)$; each foundation vertex (a leaf of the spire, equivalently a lifted copy of $v$) to the bottom $(v,L)$, which is the base-graph vertex $v$ itself. The full foundation $S_v^{(L)}$, of $D^L$ vertices in $V_\spire$, thus projects to the single vertex $(v,L) \in V_\tower$. In particular, the starting state $\ket{a_u}$ identifies under $\pi$ with the top of the tower at $u$, aligning with the welded-trees terminology of~\cite{childs03exponential}, where the walk starts at the tree root (our apex).

The tower-edge weight $\gamma=\sqrt{D}/c=\sqrt{d/c}$ is precisely the weighting that makes $\pi$ a spectrally-faithful quotient: the level-state restriction of Theorem~\ref{thm:restriction} realises this on the spectral side, identifying the level state $\ket{S_v^{(L)}}$ (the uniform superposition over the foundation $S_v^{(L)}$) with the basis vector $\ket{(v,L)}$ of $G_\tower$.

\smallskip
\begin{center}
\begin{tabular}{lll}
\toprule
                            & $G_\spire$                                 & $G_\tower$ \\
\midrule
Vertical structure at $v$   & spire (inverted $D$-ary tree)        & tower at $v$ (path of length $L$) \\
Top ($\ell = 0$)            & apex $a_v$                                 & top vertex $(v, 0)$ \\
Bottom ($\ell = L$)         & foundation $S_v^{(L)}$ ($D^L$ vertices)            & bottom vertex $(v, L) \equiv v$ \\
Layer at $\ell = L$         & lifted graph $G_\lift$                     & base graph $G$ \\
\bottomrule
\end{tabular}
\end{center}

\smallskip
$G_\tower$ specializes to the Godsil--McKay (GM) graph construction~\cite{godsil78new_graph_product} at $c=d$: there $\gamma=\sqrt{d/d}=1$, so the tower edges share the unit weight of the base-graph edges, and $G_\tower$ is the unweighted graph obtained by attaching a path of length $L$ to each vertex of $G$.


\section{Base graph families: prism and M\"obius ladder}\label{sec:families}

We study two families of $3$-regular vertex-transitive graphs on $n=2m$ vertices, $m\ge 3$. Both families share most of their edge set; they differ in only four edges, which we call the \emph{closing rail edges}. This shared structure makes them an unusually transparent test case: any distinguishability between the two (spectral, algorithmic, or query-complexity-theoretic) is fully attributable to the choice of closing edges.

\subsection{Pair coordinates and shared structure}\label{sec:pair_coords}

Both graphs $Y_m$ (the prism) and $M_m$ (the M\"obius ladder) have the common vertex set
\[
V \;=\; \{(k, b) : 0\le k\le m-1,\;b\in\{0, 1\}\},
\qquad |V| \;=\; 2m.
\]
We call $\{(k, 0)\}_{0\le k\le m-1}$ the \emph{outer rail} and $\{(k, 1)\}_{0\le k\le m-1}$ the \emph{inner rail}. The two graphs share the following \emph{common edges}:
\begin{itemize}
\item \emph{rungs}: $\{(k, 0), (k, 1)\}$ for $0\le k\le m-1$;
\item \emph{open rail edges}: $\{(k, b), (k{+}1, b)\}$ for $0\le k\le m-2$ and $b\in\{0, 1\}$.
\end{itemize}
This common edge set forms a $2{\times}m$ ladder graph: two parallel paths of $m$ vertices each, connected at every position by a rung. Each vertex on the outer or inner rail has at most three incident common edges; each of the four \emph{rail endpoints} $(0, b), (m{-}1, b)$ ($b\in\{0,1\}$) has only two common edges and is therefore one rail edge short of degree~$3$.

\subsection{The prism and the M\"obius ladder}\label{sec:Y_M}

The two graphs differ only in how the rail endpoints are joined to close the ladder into a closed band:
\begin{align*}
E(Y_m) \setminus E(M_m)
   &= \bigl\{\{(m{-}1, 0), (0, 0)\},\ \{(m{-}1, 1), (0, 1)\}\bigr\}
   &&\text{(parallel)}, \\
E(M_m) \setminus E(Y_m)
   &= \bigl\{\{(m{-}1, 0), (0, 1)\},\ \{(m{-}1, 1), (0, 0)\}\bigr\}
   &&\text{(twisted)}.
\end{align*}
We call the four edges in $E(Y_m)\,\triangle\,E(M_m)$ the \emph{closing rail edges}, and the four-vertex set $\{(0, b), (m{-}1, b) : b\in\{0,1\}\}$ together with these four edges the \emph{differing $4$-cycle}, informally \emph{the twist}. In drawings (Figures~\ref{fig:prism_mobius_7} and~\ref{fig:prism_mobius_6}) we place the twist on the right.

\paragraph{Eigenvalues.} The prism $Y_m=C_m\square K_2$ is the Cartesian product of an $m$-cycle and an edge, so its eigenvalues are
\begin{equation}\label{eq:prism_evals}
\mu_k^{\pm} \;=\; 2\cos\!\left(\frac{2\pi k}{m}\right) \pm 1, \qquad k=0,1,\ldots,m-1.
\end{equation}
The M\"obius ladder $M_m$ is isomorphic to the cycle $C_{2m}$ with $m$ additional ``rung'' edges joining vertex~$i$ to vertex $i{+}m$ for $i=0,\ldots,m{-}1$, with eigenvalues
\begin{equation}\label{eq:moebius_evals}
\mu_k \;=\; 2\cos\!\left(\frac{2\pi k}{2m}\right) + (-1)^k, \qquad k=0,1,\ldots,2m-1.
\end{equation}
Both formulas follow from the standard spectral theory of vertex-transitive graphs.

\paragraph{Unified form.} The two formulas can be brought into a common form that exposes which eigenvalues coincide between $Y_m$ and $M_m$ and which differ. Writing $\theta_j := 2\pi j/m$ for the $m$ cycle-of-length-$m$ angles, the M\"obius formula~\eqref{eq:moebius_evals} splits by parity of $k$ ($k=2j$ even, $k=2j+1$ odd, $j=0,\ldots,m{-}1$) into two branches of $m$ eigenvalues each, matching the $\pm 1$ branches of the prism formula~\eqref{eq:prism_evals}.

\begin{proposition}[Unified spectral form]\label{prop:unified_evals}
For $j = 0, 1, \ldots, m-1$, the eigenvalues of $Y_m$ and $M_m$ are
\begin{align*}
Y_m{:} \quad & \mu_j^{+} = 2\cos\theta_j + 1, \qquad \mu_j^{-} = 2\cos\theta_j - 1, \\
M_m{:} \quad & \mu_j^{+} = 2\cos\theta_j + 1, \qquad \mu_j^{-} = 2\cos(\theta_j + \pi/m) - 1.
\end{align*}
The $+1$ branches coincide between the two graphs; the $-1$ branches differ by the half-step phase shift $\theta_j \mapsto \theta_j + \pi/m$ in the cosine argument.
\end{proposition}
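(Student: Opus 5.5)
The plan is to start from the two closed-form spectra \eqref{eq:prism_evals} and \eqref{eq:moebius_evals}, which we take as given. For the prism nothing needs to be done beyond renaming $k$ as $j$: $\mu_j^{\pm}=2\cos(2\pi j/m)\pm 1=2\cos\theta_j\pm1$ for $j=0,\ldots,m-1$ is exactly \eqref{eq:prism_evals}.

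For the M\"obius ladder, split the index $k\in\{0,\ldots,2m-1\}$ in \eqref{eq:moebius_evals} by parity, as announced before the statement.
\begin{itemize}
\item \emph{Even case.} Writing $k=2j$ with $j=0,\ldots,m-1$ gives $(-1)^k=+1$ and $2\pi k/(2m)=2\pi j/m=\theta_j$. This branch is therefore $2\cos\theta_j+1$, which coincides with the prism's $+1$ branch.
\item \emph{Odd case.} Writing $k=2j+1$ with $j=0,\ldots,m-1$ gives $(-1)^k=-1$ and $2\pi(2j+1)/(2m)=\theta_j+\pi/m$. This branch is therefore $2\cos(\theta_j+\pi/m)-1$.
\end{itemize}
The maps $j\mapsto 2j$ and $j\mapsto 2j+1$ are bijections from $\{0,\ldots,m-1\}$ onto the even and odd residues in $\{0,\ldots,2m-1\}$. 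So the two branches together list all $2m$ eigenvalues of $M_m$ with the right multiplicities, and the multisets agree.

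To keep the argument self-contained against the phrase ``standard spectral theory'', I would briefly re-derive \eqref{eq:moebius_evals}. View $M_m$ as the circulant graph on $\mathbb{Z}_{2m}$ with connection set $\{\pm1, m\}$. Its eigenvectors are the characters $x\mapsto\omega^{kx}$ with $\omega=e^{2\pi i/2m}$, and the eigenvalues are $\omega^k+\omega^{-k}+\omega^{km}=2\cos(\pi k/m)+(-1)^k$. In the same way, the prism is $C_m\square K_2$, so its eigenvalues are sums of the $C_m$ eigenvalues $2\cos\theta_j$ with the $K_2$ eigenvalues $\pm1$.

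One point needs checking: the isomorphism between the pair-coordinate description of $M_m$ in Section~\ref{sec:Y_M} and the circulant description. Map $(k,0)\mapsto k$ and $(k,1)\mapsto k+m \pmod{2m}$. Under this map the rungs $\{(k,0),(k,1)\}$ become the chords $\{k,k+m\}$ and the open rail edges become cycle edges. The twisted edges become $\{m-1,m\}$ and $\{2m-1,0\}$, which are the two cycle edges that close $C_{2m}$. I expect this identification to be the only real obstacle, and it is a routine edge-by-edge check. The remaining steps are pure reindexing.
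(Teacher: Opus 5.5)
Your proposal is correct and follows essentially the same route as the paper: it obtains the eigenvalues of $Y_m=C_m\square K_2$ as the sums $2\cos\theta_j\pm1$, obtains those of $M_m$ as character sums $2\cos(\pi k/m)+(-1)^k$ on the circulant over $\mathbb{Z}_{2m}$ with connection set $\{\pm1,m\}$, and then splits by parity $k=2j$, $k=2j+1$. Your explicit isomorphism $(k,0)\mapsto k$, $(k,1)\mapsto k+m$ from the pair-coordinate definition to the circulant checks out edge by edge, and it is a worthwhile addition because the paper leaves that identification implicit.
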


\begin{proof}
For $Y_m$, the Cartesian product $Y_m = C_m \square K_2$ has eigenvalues that are sums of cycle and edge eigenvalues, $2\cos\theta_j \pm 1$. For $M_m$, the Cayley graph of $\mathbb{Z}_{2m}$ with generators $\{\pm 1, m\}$ has eigenvalues $\chi_k(1) + \chi_k(-1) + \chi_k(m) = 2\cos(\pi k/m) + (-1)^k$ at character $\chi_k(j) = e^{\pi i k j/m}$, $k = 0, \ldots, 2m{-}1$. Splitting by parity, $k=2j$ gives $\mu^+_j = 2\cos\theta_j + 1$, and $k=2j+1$ gives $\mu^-_j = 2\cos(\pi(2j+1)/m) - 1 = 2\cos(\theta_j + \pi/m) - 1$.
\end{proof}

\begin{remark}[Spectral reading of bipartiteness]\label{rem:bipartite_spectral}
A $3$-regular graph is bipartite iff $-3$ lies in its spectrum. By Proposition~\ref{prop:unified_evals}, $-3$ appears on the $-1$ branch precisely when the cosine argument equals $\pi$: for $Y_m$ this requires $\theta_j = \pi$ (i.e.\ $j = m/2$, hence $m$ even); for $M_m$ this requires $\theta_j + \pi/m = \pi$ (i.e.\ $2j + 1 = m$, hence $m$ odd). This reproduces the alternating bipartiteness of Proposition~\ref{prop:bipartite} directly from the spectrum.
\end{remark}

\begin{figure}[ht]
\centering
\begin{tikzpicture}[xscale=1.7, yscale=1.0]
  \begin{scope}
    \node[font=\small, anchor=south west] at (-3.6, 0.55) {\bfseries $+1$ branch (identical)};
    \draw[->, thick] (-3.6, 0) -- (3.6, 0);
    \foreach \x in {-3, -2, -1, 0, 1, 2, 3} {
      \draw (\x, -0.06) -- (\x, 0.06);
      \node[below, font=\small] at (\x, -0.55) {$\x$};
    }
    \foreach \xM in {3, 2.247, 2.247, 0.555, 0.555, -0.802, -0.802}
      \fill[red!75!black] (\xM, 0.25) circle (1.8pt);
    \foreach \xY in {3, 2.247, 2.247, 0.555, 0.555, -0.802, -0.802}
      \fill[blue!75!black] (\xY, -0.25) circle (1.8pt);
  \end{scope}
  \begin{scope}[yshift=-2.5cm]
    \node[font=\small, anchor=south west] at (-3.6, 0.55) {\bfseries $-1$ branch (M\"obius shifted by $\pi/m$)};
    \draw[->, thick] (-3.6, 0) -- (3.6, 0);
    \foreach \x in {-3, -2, -1, 0, 1, 2, 3} {
      \draw (\x, -0.06) -- (\x, 0.06);
      \node[below, font=\small] at (\x, -0.55) {$\x$};
    }
    \foreach \xM in {-3, -2.247, -2.247, -0.555, -0.555, 0.802, 0.802}
      \fill[red!75!black] (\xM, 0.25) circle (1.8pt);
    \foreach \xY in {1, 0.247, 0.247, -1.445, -1.445, -2.802, -2.802}
      \fill[blue!75!black] (\xY, -0.25) circle (1.8pt);
  \end{scope}
\end{tikzpicture}
\caption{Eigenvalues of $Y_7$ (blue, below the axis) and $M_7$ (red, above the axis), split by the $\pm 1$ branches of Proposition~\ref{prop:unified_evals}.  Top: on the $+1$ branch the two spectra coincide, so each $M_7$ marker has a $Y_7$ marker directly below it.  Bottom: on the $-1$ branch the $M_7$ markers are shifted by $\pi/m = \pi/7$ in the cosine argument relative to $Y_7$, and the leftmost $M_7$ eigenvalue reaches $-3$ at $\theta_3 + \pi/7 = \pi$, witnessing that $M_7$ is bipartite (Corollary~\ref{cor:exactly_one_bipartite}).  Coincident markers within a single graph correspond to the multiplicity-$2$ degeneracy from paired indices $j$ and $m{-}j$.}
\label{fig:eigenvalues_unified}
\end{figure}

\paragraph{Vertex-transitivity.} Both $Y_m$ and $M_m$ are vertex-transitive. For our quantum algorithm this matters because the return amplitude $f_G(t) = \bra{v,0}\exp(-iA_\tower t)\ket{v,0}$ is then the same for every starting top vertex $(v, 0)\in V_\tower$; equivalently, the choice of distinguished vertex $u\in V$ does not affect the quantum signal. The classical hardness conjecture in Section~\ref{sec:hardness} will, however, exploit a particular choice of $u$: vertex transitivity gives us the freedom to place $u$ anywhere, so we may as well place it where the classical adversary is worst off.

\subsection{Bipartiteness}\label{sec:bipartite}

Define the parity colouring $c: V\to\{0, 1\}$ by
\[
c(k, b) \;=\; (k + b) \bmod 2.
\]

\begin{proposition}[Alternating bipartiteness]\label{prop:bipartite}
$Y_m$ is bipartite if and only if $m$ is even, and $M_m$ is bipartite if and only if $m$ is odd. In both bipartite cases, a $2$-colouring witness is the parity colouring~$c$.
\end{proposition}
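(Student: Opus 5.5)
The plan is a direct combinatorial argument with two halves: a sufficiency check that the parity colouring $c$ is proper in the stated cases, and a necessity argument exhibiting an odd cycle in the remaining cases. The spectral criterion of Remark~\ref{rem:bipartite_spectral} could replace the necessity half. However, it rests on Proposition~\ref{prop:unified_evals} and the standard fact that a connected $3$-regular graph is bipartite iff $-3$ is an eigenvalue. An explicit odd cycle is more elementary and also avoids any circularity with the remark, which cites this proposition.

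\emph{Sufficiency.} First, all common edges are properly coloured for every $m$. A rung $\{(k,0),(k,1)\}$ changes $b$ by one and fixes $k$, so $c$ flips. An open rail edge $\{(k,b),(k+1,b)\}$ changes $k$ by one and fixes $b$, so $c$ flips. It remains to check the closing rail edges.
\begin{itemize}
\item For $Y_m$, the edge $\{(m-1,b),(0,b)\}$ has endpoint colours $(m-1+b)$ and $b$ mod $2$. These differ iff $m-1$ is odd, i.e.\ iff $m$ is even.
\item For $M_m$, the edge $\{(m-1,b),(0,1-b)\}$ has endpoint colours $(m-1+b)$ and $(1-b)\equiv 1+b$ mod $2$. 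These differ iff $m-1$ and $1$ have different parity, i.e.\ iff $m-1$ is even, i.e.\ iff $m$ is odd.
\end{itemize}
So $c$ is a proper $2$-colouring exactly in the claimed cases.

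\emph{Necessity.} In the complementary cases I will exhibit an odd cycle.
\begin{itemize}
\item For $Y_m$ with $m$ odd, the outer rail $(0,0),(1,0),\ldots,(m-1,0)$, closed by the edge $\{(m-1,0),(0,0)\}$, is a cycle of length $m$, which is odd.
\item For $M_m$ with $m$ even, take the cycle that runs along the outer rail from $(0,0)$ to $(m-1,0)$ using $m-1$ edges. It then takes the twisted edge $\{(m-1,0),(0,1)\}$ and finally the rung $\{(0,1),(0,0)\}$. Its length is $(m-1)+2=m+1$, which is odd.
\end{itemize}
Since a graph containing an odd cycle is not bipartite, this completes both directions.

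There is no genuine obstacle. The only care needed is to confirm that the $M_m$ cycle is simple, which holds because its vertices are the distinct outer-rail vertices together with $(0,1)$. One should also confirm that $m\ge 3$ keeps all the listed edges distinct, as assumed in Section~\ref{sec:families}.
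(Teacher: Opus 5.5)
Your proposal is correct and follows the paper's own proof essentially line for line. Like the paper, you check that the parity colouring is proper on rungs and open rail edges and then on the closing edges (a parity difference of $m-1$ for $Y_m$, and $\equiv m$ for $M_m$). For the converse you use the same two odd cycles: the outer rail of length $m$ in $Y_m$, and the $(m+1)$-cycle through the twisted edge $\{(m-1,0),(0,1)\}$ and the rung at $0$ in $M_m$.
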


\begin{proof}
The colouring $c$ assigns differing parities to the two endpoints of every rung and every open rail edge. The closing rail edges contribute as follows.
\begin{itemize}
\item In $Y_m$, the parallel closing edge $\{(m{-}1, b), (0, b)\}$ has parity difference $(m-1+b)-b = m-1$; this is odd iff $m$ is even, so $c$ is a proper $2$-colouring iff $m$ is even.
\item In $M_m$, the twisted closing edge $\{(m{-}1, b), (0, 1{-}b)\}$ has parity difference $(m-1+b)-(1-b) = m-2+2b\equiv m\pmod 2$; this is odd iff $m$ is odd.
\end{itemize}
Conversely, in each failing case the graph contains an explicit odd cycle. For $Y_m$ with $m$ odd, the outer rail $(0,0)\to(1,0)\to\cdots\to(m{-}1,0)\to(0,0)$ is an odd $m$-cycle. For $M_m$ with $m$ even, the closed walk $(0,0)\to(1,0)\to\cdots\to(m{-}1,0)\to(0,1)\to(0,0)$ (using $m{-}1$ open rail edges, one twisted closing edge, and one rung) is an odd cycle of length $m+1$.
\end{proof}

\begin{corollary}[Bipartite member toggles between families]\label{cor:exactly_one_bipartite}
For every integer $m\ge 3$, exactly one of the graphs $Y_m, M_m$ is bipartite: $Y_m$ when $m$ is even, $M_m$ when $m$ is odd. In particular, the bipartite member of the pair $\{Y_m, M_m\}$ alternates between the two families as $m$ increments by one.
\end{corollary}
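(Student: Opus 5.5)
The corollary is a direct consequence of Proposition~\ref{prop:bipartite}, so the plan is a short case split on the parity of $m$. Fix an integer $m\ge 3$. Proposition~\ref{prop:bipartite} gives two characterisations: $Y_m$ is bipartite if and only if $m$ is even, and $M_m$ is bipartite if and only if $m$ is odd. Every integer is exactly one of even or odd, so we split into these two cases.

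If $m$ is even, the first equivalence shows $Y_m$ is bipartite. Since $m$ is not odd, the second equivalence (in its ``only if'' direction) shows $M_m$ is not bipartite. If $m$ is odd, the roles swap: $M_m$ is bipartite and $Y_m$ is not. In each case exactly one member of the pair $\{Y_m,M_m\}$ is bipartite, and it is the one named in the statement.

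The ``in particular'' clause follows because incrementing $m$ by one flips its parity. Hence the bipartite member changes from $Y_m$ to $M_{m+1}$, or from $M_m$ to $Y_{m+1}$. As a cross-check, Remark~\ref{rem:bipartite_spectral} gives the same conclusion spectrally: $-3$ lies on the $-1$ branch of exactly one of the two spectra.

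The only point needing care is the ``exactly one'' claim. It requires both directions of each ``if and only if'' in Proposition~\ref{prop:bipartite}. The non-bipartite direction rests on the explicit odd cycles exhibited in its proof: the outer rail for $Y_m$ with $m$ odd, and the $(m+1)$-cycle for $M_m$ with $m$ even. I expect no real obstacle here, since the proposition already supplies both directions.
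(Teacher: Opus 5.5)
Your proposal is correct and takes the same approach as the paper. The paper also derives the corollary directly from both directions of Proposition~\ref{prop:bipartite}, together with the fact that every integer is exactly one of even or odd.
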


\begin{proof}
Immediate from Proposition~\ref{prop:bipartite}, since every integer is either even or odd but not both.
\end{proof}

\begin{remark}[Algorithmic scope]\label{rem:algorithm_scope}
Corollary~\ref{cor:exactly_one_bipartite} suggests a problem-specific shortcut: since a bipartite $d$-regular graph has spectrum symmetric about~$0$, the return amplitude of the bipartite member of $\{Y_m, M_m\}$ is real-valued and $\mathrm{Im}\,f \equiv 0$ on that side.  An imaginary-branch algorithm exploiting this single-graph signal is developed in the supplementary material.  The cross-graph algorithm of Section~\ref{sec:algorithm}, by contrast, uses the full signal $f_{Y_m} - f_{M_m}$ and does not exploit bipartiteness; we present it because it remains meaningful for pairs of candidate graphs in which neither is bipartite.
\end{remark}

\begin{remark}\label{rem:K33}
For $m=3$ the M\"obius ladder $M_3$ coincides with the complete bipartite graph $K_{3,3}$, in agreement with Proposition~\ref{prop:bipartite}.
\end{remark}

The numerical study of Section~\ref{sec:numerics} exploits this alternation, sampling both parity classes at every scale via the doubling pairs $(2^k, 2^k{+}1)$ and reporting independent scaling fits on the even-$m$ and odd-$m$ subsequences.

\subsection{Symmetric placement of the distinguished vertex}\label{sec:placement}

Where on the outer rail should we place the distinguished vertex $u$ to make the classical-distinguishing problem of Section~\ref{sec:hardness} as hard as possible? Intuitively, $u$ should sit as far as possible from the differing $4$-cycle, since a classical algorithm starting at the apex $a_{u}$ must first traverse spires and random alternating cycles before any of the differing closing edges can come into view. The shared common-edge structure of $Y_m$ and $M_m$ makes the question precise.

\begin{proposition}[Symmetric placement]\label{prop:placement}
Suppose $m$ is odd, $m\ge 3$, and set $u := ((m{-}1)/2, 0)$. Then in both $Y_m$ and~$M_m$:
\begin{enumerate}
\item[\textup{(a)}] the two paths from $u$ along the open outer rail going clockwise and counter-clockwise reach $(0, 0)$ and $(m{-}1, 0)$ in exactly $(m{-}1)/2$ open-rail steps each;
\item[\textup{(b)}] $\dist(u, (0, 0)) = \dist(u, (m{-}1, 0)) = (m{-}1)/2$, and these are the two closest vertices of the differing $4$-cycle to~$u$;
\item[\textup{(c)}] the inner-rail vertices of the differing $4$-cycle are slightly farther: $\dist(u, (0, 1)) = \dist(u, (m{-}1, 1)) = (m{+}1)/2$.
\end{enumerate}
For $m$ even, no choice of $u$ on the outer rail puts the two outer-rail closing-edge endpoints at equal graph distance; the most symmetric placement $u = (m/2, 0)$ has distances $m/2$ and $m/2 - 1$, asymmetric by one.
\end{proposition}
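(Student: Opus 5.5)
The plan is to compute all relevant distances explicitly, using a simple lower-bound device together with explicit paths. For the lower bound, use the \emph{column projection} $k$: in both $Y_m$ and $M_m$, every edge changes $k$ by at most one, except the closing rail edges, which join column $0$ to column $m{-}1$. Rungs change $k$ by $0$, open rail edges by $\pm1$, and closing edges by $\pm(m{-}1)$, but they connect the two end columns. Hence, viewing columns cyclically (columns $0$ and $m{-}1$ adjacent), the cyclic column distance
\[
\min(|k-k'|,\,m-|k-k'|)
\]
is a lower bound on $\dist((k,b),(k',b'))$. This holds in both graphs, since their edge sets project to the same column cycle $C_m$. In addition, every change of rail $b$ requires either a rung (column change $0$) or, in $M_m$, a twisted closing edge.

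For (a), write down the outer-rail paths $(\tfrac{m-1}{2},0)\to(\tfrac{m-3}{2},0)\to\cdots\to(0,0)$ and $(\tfrac{m-1}{2},0)\to\cdots\to(m{-}1,0)$. These use only common open rail edges, and each has $(m{-}1)/2$ steps because $m$ is odd. For (b), the upper bound comes from (a). The lower bound is the cyclic column distance from $(m{-}1)/2$ to $0$ or to $m{-}1$, namely $\min\bigl(\tfrac{m-1}{2},\tfrac{m+1}{2}\bigr)=\tfrac{m-1}{2}$. Since the differing $4$-cycle consists of columns $0$ and $m{-}1$, every vertex of it is at distance at least $(m{-}1)/2$, so the two outer vertices are closest once (c) shows the inner ones are strictly farther.

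Part (c) is the main obstacle, because the bound must gain one unit for the rail switch. The upper bound comes from the path in (a) followed by the rung at column $0$ (or $m{-}1$), of length $(m{+}1)/2$. For the lower bound, consider a path from $u$ on rail $0$ to $(0,1)$. If it uses no closing edge, it stays in the ladder and uses at least one rung, which gives no column progress, so its length is at least $\tfrac{m-1}{2}+1$. If it uses a closing edge, it must first reach column $0$ or $m{-}1$, which costs at least $(m{-}1)/2$ steps. That closing edge then moves it to the opposite end column, and it has spent at least $\tfrac{m-1}{2}+1$ steps already. I expect to phrase this cleanly with the potential $\phi=\text{cyclic column distance}+[\text{rail}\neq 0]$, which can drop by at most one per edge, except that a twisted closing edge can drop it by one while also switching the rail. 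The case analysis is to verify that such an edge does not reduce $\phi$ by $2$, by checking that the relevant endpoints sit at column distance $(m{-}1)/2$ on both sides. A direct parity argument on bipartite $M_m$ (odd $m$, Proposition~\ref{prop:bipartite}) also helps: $(0,1)$ and $u$ have colours of parity $1$ and $(m{-}1)/2$ respectively, which constrains the parity of the distance.

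For the even-$m$ remark, the distances from $(k,0)$ to $(0,0)$ and $(m{-}1,0)$ are the cyclic column distances, by the same upper and lower bound argument. Setting them equal gives $k=m{-}1{-}k$, i.e.\ $2k=m{-}1$, which is impossible for even $m$. At $k=m/2$ they evaluate to $m/2$ and $m/2-1$.
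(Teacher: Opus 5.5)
Your argument for (a)--(c) is correct, and it is more careful than the paper's sketch. The paper notes only that the outer-rail paths use common open-rail edges and that ``any path detouring through the inner rail uses at least two extra rung edges''. It never treats paths that wrap around through the closing edges. Its two-rung claim is also not literally valid in $M_m$, where a path can change rail through a twisted closing edge with no rung at all. Your column projection onto $C_m$ handles wrap-around routes in one line. Your split by the first closing edge gives the extra unit in (c) uniformly for both graphs: before that edge the path lives in the ladder and must spend at least $(m-1)/2$ steps to reach an end column. The potential $\phi$ and the parity argument in the bipartite $M_m$ are valid alternatives.

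The gap is in your even-$m$ paragraph. You assert that the distances from $(k,0)$ to $(0,0)$ and to $(m-1,0)$ equal the cyclic column distances. This holds in $Y_m$, because the parallel closing edges keep you on the outer rail, but it fails in $M_m$. The upper-bound half needs a path that realizes the cyclic distance. When the short way crosses the twist, you land on the inner rail and pay an extra rung. Concretely, in $M_4$ the vertex $(3,0)$ has neighbours $(2,0)$, $(3,1)$, $(0,1)$, so $\dist((3,0),(0,0))=2$, not the cyclic column distance $1$. In general the $M_m$ distances are $\min(k,\,m-k+1)$ and $\min(m-1-k,\,k+2)$, so your step ``setting them equal gives $2k=m-1$'' is unjustified for $M_m$.

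The conclusion is still true, and your own case analysis from (c) repairs it. Split paths by their first closing edge. A path to $(0,0)$ that first crosses the twist from column $m-1$ must also use a rung, either before or after the crossing; symmetrically for $(m-1,0)$ with a first crossing from column $0$. Together with explicit paths, this yields exactly the two formulas above. A short case check shows they never coincide for even $m$, and at $k=m/2$ they equal $m/2$ and $m/2-1$, as stated. The paper's own sketch for this clause only observes that an odd number of open-rail edges has no midpoint, which says nothing about graph distance, so a computation of this kind is genuinely needed.
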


\begin{proof}[Sketch]
The two outer-rail paths from $u$ to the closing-edge endpoints use only open rail edges, which are common to $Y_m$ and $M_m$, so all distances stated above agree across the two graphs. For $m$ odd, the open outer rail consists of $m-1$ edges between $(0, 0)$ and $(m{-}1, 0)$, and $u = ((m{-}1)/2, 0)$ is at index distance $(m{-}1)/2$ from each endpoint. Any path detouring through the inner rail uses at least two extra rung edges, so the outer-rail path is shortest. For $m$ even, the open outer rail has $m-1$ edges, an odd number, so no midpoint exists.
\end{proof}

For our purposes the distinction between $m$ odd and $m$ even matters: with $m$ odd, the two sides of the graph (clockwise and counter-clockwise from $u$) are interchangeable, and the differing $4$-cycle sits at maximum graph distance from $u$ along the outer rail. We will adopt this odd-$m$ symmetric placement when stating the classical-hardness conjecture in Section~\ref{sec:hardness}. The numerical results in Sections~\ref{sec:numerics} are insensitive to the choice of $u$ by vertex-transitivity, and so are presented for general $m\ge 4$.

\begin{figure}[ht]
\centering
\includegraphics[width=0.42\textwidth]{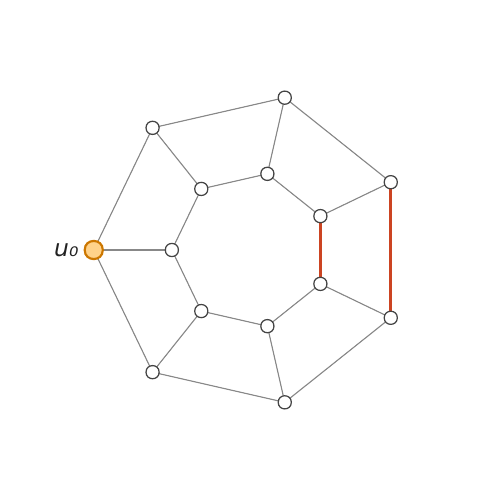}\hspace{0.04\textwidth}%
\includegraphics[width=0.42\textwidth]{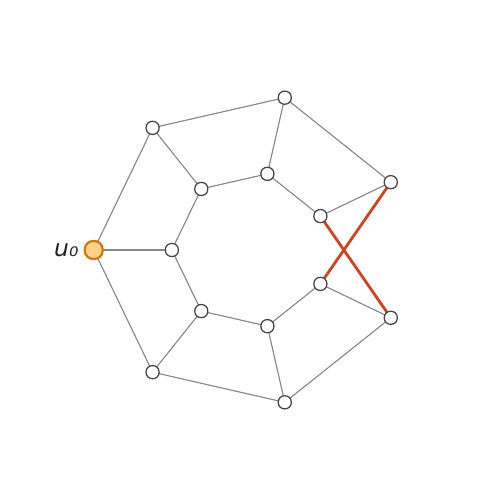}
\caption{The prism $Y_7$ (left) and the M\"obius ladder $M_7$ (right) in pair coordinates. Outer-rail vertices $(k, 0)$ lie on the larger circle, inner-rail vertices $(k, 1)$ on the smaller. The four edges in $E(Y_7)\,\triangle\,E(M_7)$ are drawn in red. In $Y_7$ they close each rail into its own $7$-cycle; in $M_7$ they cross to merge the two rails into a single $14$-cycle. With $m=7$ odd, the distinguished vertex $u = ((m{-}1)/2,\, 0) = (3, 0)$ sits at equal distance $(m{-}1)/2 = 3$ from both outer-rail closing-edge endpoints $(0, 0)$ and $(m{-}1, 0) = (6, 0)$ along the open outer rail. This symmetric placement is unavailable for $m$ even (Figure~\ref{fig:prism_mobius_6}).}
\label{fig:prism_mobius_7}
\end{figure}

\begin{figure}[ht]
\centering
\includegraphics[width=0.42\textwidth]{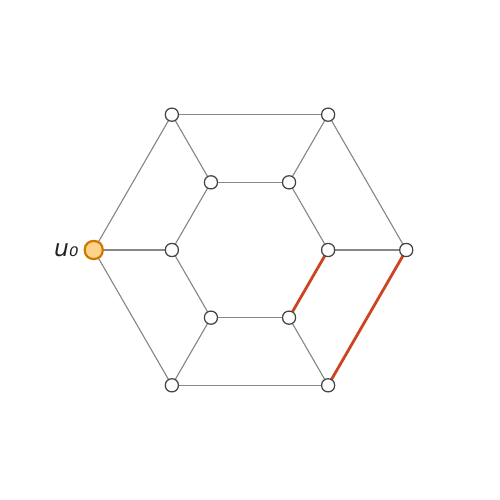}\hspace{0.04\textwidth}%
\includegraphics[width=0.42\textwidth]{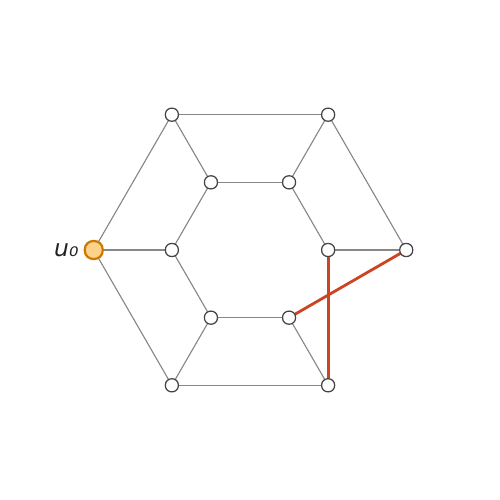}
\caption{The prism $Y_6$ (left) and the M\"obius ladder $M_6$ (right) in pair coordinates. As in Figure~\ref{fig:prism_mobius_7}, the four edges in $E(Y_6)\,\triangle\,E(M_6)$ are drawn in red. With $m=6$ even, the distinguished vertex $u = (m/2, 0) = (3, 0)$ shown on the left of each figure is at distance $3$ from $(0, 0)$ and at distance $2$ from $(m{-}1, 0) = (5, 0)$ along the open outer rail, an asymmetry of one. The symmetric placement of Proposition~\ref{prop:placement} is unavailable when $m$ is even.}
\label{fig:prism_mobius_6}
\end{figure}

\FloatBarrier


\section{Spectral theory of the towered graph}\label{sec:spectral}

We now develop a complete spectral decomposition of the towered graph $G_\tower$. The key insight is that $A_\tower$ admits a tensor-product decomposition that reduces the $n^2$-dimensional eigenvalue problem to $n$ independent problems of size $n$.

\subsection{Tensor product structure}\label{sec:tensor}

We identify the vertex $(v,\ell)$ of $G_\tower$ with the tensor product state $\ket{v}\otimes\ket{\ell}$, so the Hilbert space is $\cH=\C^n\otimes\C^{L+1}$, where $L=n-1$.

\begin{definition}[Path matrix]\label{def:path}
The weighted path matrix $P\in\R^{(L+1)\times(L+1)}$ is
\[
P = \gamma\sum_{\ell=0}^{L-1}\bigl(\ket{\ell}\bra{\ell{+}1}+\ket{\ell{+}1}\bra{\ell}\bigr).
\]
\end{definition}

\begin{proposition}[Tensor decomposition]\label{prop:tensor}
The adjacency matrix $A_\tower$ satisfies
\begin{equation}\label{eq:H}
A_\tower = A\otimes\proj{L} + I_n\otimes P,
\end{equation}
where $\proj{L}\in\R^{(L+1)\times(L+1)}$ is the projection onto the bottom $\ell=L$.
\end{proposition}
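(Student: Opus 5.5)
The plan is to verify the identity entrywise in the product basis $\{\ket{v}\otimes\ket{\ell}\}$, which is exactly the basis $\{\ket{(v,\ell)}\}$ of $\C^{V_\tower}$ used in Definition~\ref{def:tower}. Both sides are real symmetric, so it suffices to compute $\bra{v,\ell}\,\cdot\,\ket{w,\ell'}$ for all $v,w\in V$ and $\ell,\ell'\in\{0,\ldots,L\}$. On the right-hand side these matrix elements factor: the first term gives $\bra{v}A\ket{w}\,\bra{\ell}L\>\<L\ket{\ell'} = [\{v,w\}\in E]\,[\ell=\ell'=L]$, and the second gives $[v=w]\,\bra{\ell}P\ket{\ell'} = [v=w]\,\gamma\,[|\ell-\ell'|=1]$ by Definition~\ref{def:path}.

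I would then match these against the two edge classes of Definition~\ref{def:tower}. The tower edges $\{(v,\ell),(v,\ell{+}1)\}$ of weight $\gamma$ are exactly the nonzero entries of $I_n\otimes P$: same base vertex, adjacent levels. The base-graph edges $\{(v,L),(w,L)\}$ of weight $1$ are exactly the nonzero entries of $A\otimes\proj{L}$, since $\proj{L}$ confines them to the bottom layer and $A$ supplies the adjacency of $G$.

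The two supports are disjoint. The first term requires $\ell=\ell'$ and $v\neq w$, because $A$ has zero diagonal since $G$ is simple. The second term requires $v=w$ and $\ell\neq\ell'$. Hence no entry is double-counted. Every remaining pair, namely different base vertices not both at level $L$, or non-adjacent levels, gives zero on both sides. Diagonal entries vanish on both sides because $A$ and $P$ have zero diagonal and there are no loops in $G_\tower$.

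There is no real obstacle. The only point needing care is this disjointness and zero-diagonal bookkeeping, which confirms that the sum reproduces $A_\tower$ with the correct weights. One could also cite Theorem~\ref{thm:restriction}, since the computation there already lists the matrix elements; the tensor form is simply a repackaging of them with $L=n-1$.
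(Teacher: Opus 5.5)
Your proposal is correct and is essentially the paper's argument: the paper also verifies the identity directly in the product basis. It applies both sides to each $\ket{v,\ell}$ with a case split on the level (top, interior, bottom), whereas you run the same check entrywise, organised by edge type, with the disjoint-support and zero-diagonal bookkeeping. Your closing aside is unnecessary, since the identity holds for every $L$ (not only $L=n-1$) directly from Definition~\ref{def:tower}, without appeal to Theorem~\ref{thm:restriction}.
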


\begin{proof}
We verify that~\eqref{eq:H} reproduces the correct matrix elements. For a state $\ket{v,\ell}=\ket{v}\otimes\ket{\ell}$:

\emph{Case $1\le\ell\le L{-}1$ (interior of the path):} The only neighbors are $(v,\ell\pm 1)$, both connected by weight~$\gamma$. We compute:
$(A\otimes\proj{L})\ket{v,\ell}=A\ket{v}\otimes\braket{L}{\ell}\ket{L}=0$ since $\ell\neq L$. And $(I\otimes P)\ket{v,\ell}=\ket{v}\otimes(\gamma\ket{\ell{-}1}+\gamma\ket{\ell{+}1})$. Combined: $A_\tower\ket{v,\ell}=\gamma\ket{v,\ell{-}1}+\gamma\ket{v,\ell{+}1}$.~$\checkmark$

\emph{Case $\ell=0$ (top):} The only neighbor is $(v,1)$. We get $(A\otimes\proj{L})\ket{v,0}=0$ and $(I\otimes P)\ket{v,0}=\gamma\ket{v,1}$.~$\checkmark$

\emph{Case $\ell=L$ (bottom):} The neighbors are $(v,L{-}1)$ (weight $\gamma$) and all $(w,L)$ for $\{v, w\}\in E$ (weight~$1$). We get $(A\otimes\proj{L})\ket{v,L}=A\ket{v}\otimes\ket{L}=\sum_w A_{vw}\ket{w,L}$ and $(I\otimes P)\ket{v,L}=\gamma\ket{v,L{-}1}$.~$\checkmark$
\end{proof}

\subsection{Block diagonalization}\label{sec:block}

The tensor structure enables an exact block diagonalization.

\begin{lemma}[Commutativity]\label{lem:commute}
$[A_\tower,\;A\otimes I_{L+1}] = 0$.
\end{lemma}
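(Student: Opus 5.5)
The plan is to use the tensor decomposition of Proposition~\ref{prop:tensor}, $A_\tower = A\otimes\proj{L} + I_n\otimes P$, and commute $A\otimes I_{L+1}$ with each summand separately. Linearity of the commutator then gives the full statement.

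For the first summand, the mixed-product rule for Kronecker products gives
\[
(A\otimes\proj{L})(A\otimes I_{L+1}) = A^2\otimes\proj{L} = (A\otimes I_{L+1})(A\otimes\proj{L}),
\]
since $A$ commutes with itself and $\proj{L}$ commutes with the identity.

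For the second summand,
\[
(I_n\otimes P)(A\otimes I_{L+1}) = A\otimes P = (A\otimes I_{L+1})(I_n\otimes P),
\]
since $I_n$ commutes with $A$ and $P$ commutes with $I_{L+1}$.

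Subtracting in each case shows that both commutators vanish, so $[A_\tower, A\otimes I_{L+1}]=0$.

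There is no real obstacle; the only point to check is that the tensor factors act on the same spaces $\C^n\otimes\C^{L+1}$ under the identification $\ket{v,\ell}=\ket{v}\otimes\ket{\ell}$ fixed in Section~\ref{sec:tensor}. With that in place, the mixed-product rule $(X\otimes Y)(X'\otimes Y')=XX'\otimes YY'$ applies directly.

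This lemma will then let us diagonalize $A$ by an eigenbasis $\{\ket{\phi_\mu}\}$ and restrict $A_\tower$ to each invariant block $\ket{\phi_\mu}\otimes\C^{L+1}$. On that block $A_\tower$ acts as $\mu\proj{L}+P$, which is the tridiagonal system of size $n$ described in the introduction.
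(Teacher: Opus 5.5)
Your proof is correct and matches the paper's argument. Both start from the decomposition $A_\tower = A\otimes\proj{L} + I_n\otimes P$ of Proposition~\ref{prop:tensor} and use the mixed-product rule to show that each summand commutes with $A\otimes I_{L+1}$; the paper simply writes the two commutators directly as $(A^2-A^2)\otimes\proj{L}$ and $(A-A)\otimes P$.
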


\begin{proof}
We compute each term separately:
\begin{align*}
[A\otimes\proj{L},\;A\otimes I] &= (A^2-A^2)\otimes\proj{L} = 0,\\
[I\otimes P,\;A\otimes I] &= (A-A)\otimes P = 0.
\end{align*}
Both commutators vanish because $A$ commutes with itself and $I$ commutes with everything.
\end{proof}

Since $A_\tower$ commutes with $A\otimes I$, both operators can be simultaneously diagonalized. Let $A = \sum_{j=1}^n \mu_j\proj{\alpha_j}$ be the spectral decomposition of $A$, with eigenvalues $\mu_1\le\cdots\le\mu_n$ and orthonormal eigenvectors $\ket{\alpha_1},\ldots,\ket{\alpha_n}$.

\begin{definition}[Path Hamiltonian with boundary potential]\label{def:path_ham}
For each $\mu\in\R$, the \emph{path Hamiltonian with boundary potential~$\mu$} is the $(L{+}1)\times(L{+}1)$ tridiagonal matrix
\begin{equation}\label{eq:path_ham}
H^{(\mu)} = \mu\proj{L} + P =
\begin{pmatrix}
0 & \gamma \\
\gamma & 0 & \gamma \\
& \gamma & \ddots & \ddots \\
& & \ddots & 0 & \gamma \\
& & & \gamma & \mu
\end{pmatrix}.
\end{equation}
This is a path graph with uniform edge weight~$\gamma$ and a boundary potential $\mu$ at the bottom (position $L$, the last diagonal entry). The top (position $0$) has no potential.
\end{definition}

\begin{figure}[ht]
  \centering
  \definecolor{topfill}{HTML}{85B7EB}
  \definecolor{topstroke}{HTML}{185FA5}
  \definecolor{intfill}{HTML}{F09595}
  \definecolor{intstroke}{HTML}{E24B4A}
  \definecolor{botfill}{HTML}{1D9E75}
  \definecolor{botstroke}{HTML}{0F6E56}
  \definecolor{pathedge}{HTML}{D85A30}
  \definecolor{baseedge}{HTML}{888780}
  \begin{tikzpicture}[
    vertex/.style={circle, minimum size=11pt, inner sep=0pt, thick},
    apex/.style={vertex, fill=topfill, draw=topstroke},
    int/.style={vertex, fill=intfill, draw=intstroke},
    bot/.style={vertex, fill=botfill, draw=botstroke},
  ]
    \node[apex, label={below:$\ell=0$}] (l0) at (0,0) {};
    \node[int,  label={below:$\ell=1$}] (l1) at (2,0) {};
    \node[int,  label={below:$\ell=2$}] (l2) at (4,0) {};
    \node[bot,  label={below:$\ell=3$}] (l3) at (6,0) {};

    \draw[pathedge, thick] (l0) -- node[above] {$\gamma$} (l1);
    \draw[pathedge, thick] (l1) -- node[above] {$\gamma$} (l2);
    \draw[pathedge, thick] (l2) -- node[above] {$\gamma$} (l3);

    \draw[baseedge, thick] (l3) edge[out=35, in=-35, looseness=22, min distance=18pt]
          node[right] {$\mu$} (l3);
  \end{tikzpicture}
  \caption{The path Hamiltonian $H^{(\mu)}$ as a weighted graph: a path on $L+1$ vertices indexed by $\ell = 0, 1, \ldots, L$ (drawn for $L=3$), with uniform edge weight $\gamma$ and a single self-loop of weight $\mu$ at the bottom vertex $\ell=L$.
  The self-loop captures the boundary effect of the base-graph eigenvalue~$\mu$; by Proposition~\ref{prop:block} below, $A_\tower$ decomposes as $\bigoplus_{j=1}^n H^{(\mu_j)}$, a direct sum of $n$ such path-with-self-loop graphs, one for each eigenvalue of~$A$.}
  \label{fig:path_ham}
\end{figure}

The physical interpretation is transparent (see Figure~\ref{fig:path_ham}): $H^{(\mu)}$ describes a quantum particle on a finite path, with a reflecting boundary at the top ($\ell=0$) and a potential~$\mu$ at the bottom ($\ell=L$) that encodes the effect of the base graph.

\begin{proposition}[Block diagonalization]\label{prop:block}
For each $j\in\{1,\ldots,n\}$, the subspace $\cH_j = \mathrm{span}\{\ket{\alpha_j}\otimes\ket{\ell} : \ell=0,\ldots,L\}$ is invariant under $A_\tower$, and $A_\tower|_{\cH_j}$ is unitarily equivalent to $H^{(\mu_j)}$. Consequently,
\begin{equation}\label{eq:block}
A_\tower \cong \bigoplus_{j=1}^n H^{(\mu_j)}.
\end{equation}
\end{proposition}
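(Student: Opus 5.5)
The plan is to compute $A_\tower$ directly on the product vectors $\ket{\alpha_j}\otimes\ket{\ell}$ using the tensor decomposition of Proposition~\ref{prop:tensor}, $A_\tower = A\otimes\proj{L} + I_n\otimes P$. Because $A\ket{\alpha_j}=\mu_j\ket{\alpha_j}$, the first term acts as
\[
(A\otimes\proj{L})(\ket{\alpha_j}\otimes\ket{\ell}) = \ket{\alpha_j}\otimes \mu_j\proj{L}\,\ket{\ell},
\]
and the second term acts as $\ket{\alpha_j}\otimes P\ket{\ell}$. Adding these gives the identity
\[
A_\tower(\ket{\alpha_j}\otimes\ket{\psi}) = \ket{\alpha_j}\otimes H^{(\mu_j)}\ket{\psi}
\]
for every $\ket{\psi}\in\C^{L+1}$, by linearity in $\ket{\psi}$. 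This identity shows at once that $\cH_j$ is invariant under $A_\tower$.

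For the unitary equivalence, I would define the isometry $U_j:\C^{L+1}\to\cH_j$ by $\ket{\psi}\mapsto\ket{\alpha_j}\otimes\ket{\psi}$. It is unitary onto $\cH_j$ because $\ket{\alpha_j}$ has unit norm, so it maps the orthonormal basis $\{\ket{\ell}\}$ to the orthonormal basis $\{\ket{\alpha_j}\otimes\ket{\ell}\}$. The identity above then reads $A_\tower U_j = U_j H^{(\mu_j)}$, which gives $A_\tower|_{\cH_j}=U_jH^{(\mu_j)}U_j^\dagger$.

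For the direct-sum statement, the subspaces $\cH_j$ are mutually orthogonal because the $\ket{\alpha_j}$ are orthonormal. Their dimensions add to $n(L+1)=\dim\cH$, so $\cH=\bigoplus_j\cH_j$. Setting $U=\bigoplus_j U_j$ gives a unitary $\bigoplus_j\C^{L+1}\to\cH$ with $U^\dagger A_\tower U=\bigoplus_j H^{(\mu_j)}$. Lemma~\ref{lem:commute} gives the conceptual reason for the result: the $\cH_j$ are the eigenspaces of $A\otimes I$ (or lie inside them when eigenvalues repeat). The direct computation above is enough, however, and it avoids the one subtlety in the argument, namely degenerate eigenvalues of $A$ such as the multiplicity-two pairs of $Y_m$ and $M_m$. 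An eigenspace argument would only give invariance of the whole $\mu$-eigenspace of $A\otimes I$. Working with a fixed orthonormal eigenbasis instead yields the finer splitting into one block per index $j$, with each block appearing as many times as the multiplicity of its eigenvalue.
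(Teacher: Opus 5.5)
Your proof is correct and follows the paper's own argument: both apply $A_\tower = A\otimes\proj{L} + I_n\otimes P$ to $\ket{\alpha_j}\otimes\ket{\phi}$ and use $A\ket{\alpha_j}=\mu_j\ket{\alpha_j}$ to obtain $\ket{\alpha_j}\otimes H^{(\mu_j)}\ket{\phi}$. You also spell out what the paper leaves implicit in passing to the direct sum: the isometry $U_j$, the mutual orthogonality of the $\cH_j$, and the dimension count.
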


\begin{proof}
Let $\ket{\psi}=\ket{\alpha_j}\otimes\ket{\phi}\in\cH_j$. Then:
\begin{align*}
A_\tower\ket{\psi}
&= (A\otimes\proj{L})(\ket{\alpha_j}\otimes\ket{\phi}) + (I\otimes P)(\ket{\alpha_j}\otimes\ket{\phi})\\
&= \mu_j\ket{\alpha_j}\otimes(\braket{L}{\phi}\ket{L}) + \ket{\alpha_j}\otimes(P\ket{\phi})\\
&= \ket{\alpha_j}\otimes H^{(\mu_j)}\ket{\phi}\;\in\;\cH_j.\qedhere
\end{align*}
\end{proof}

When the base graph has repeated eigenvalues $\mu_j=\mu_{j'}$, the path Hamiltonians $H^{(\mu_j)}=H^{(\mu_{j'})}$ are identical, producing eigenvalues with multiplicity equal to that of $\mu_j$ in $\sigma(A)$.

\begin{figure}[ht]
\centering
\includegraphics[width=1.0\textwidth]{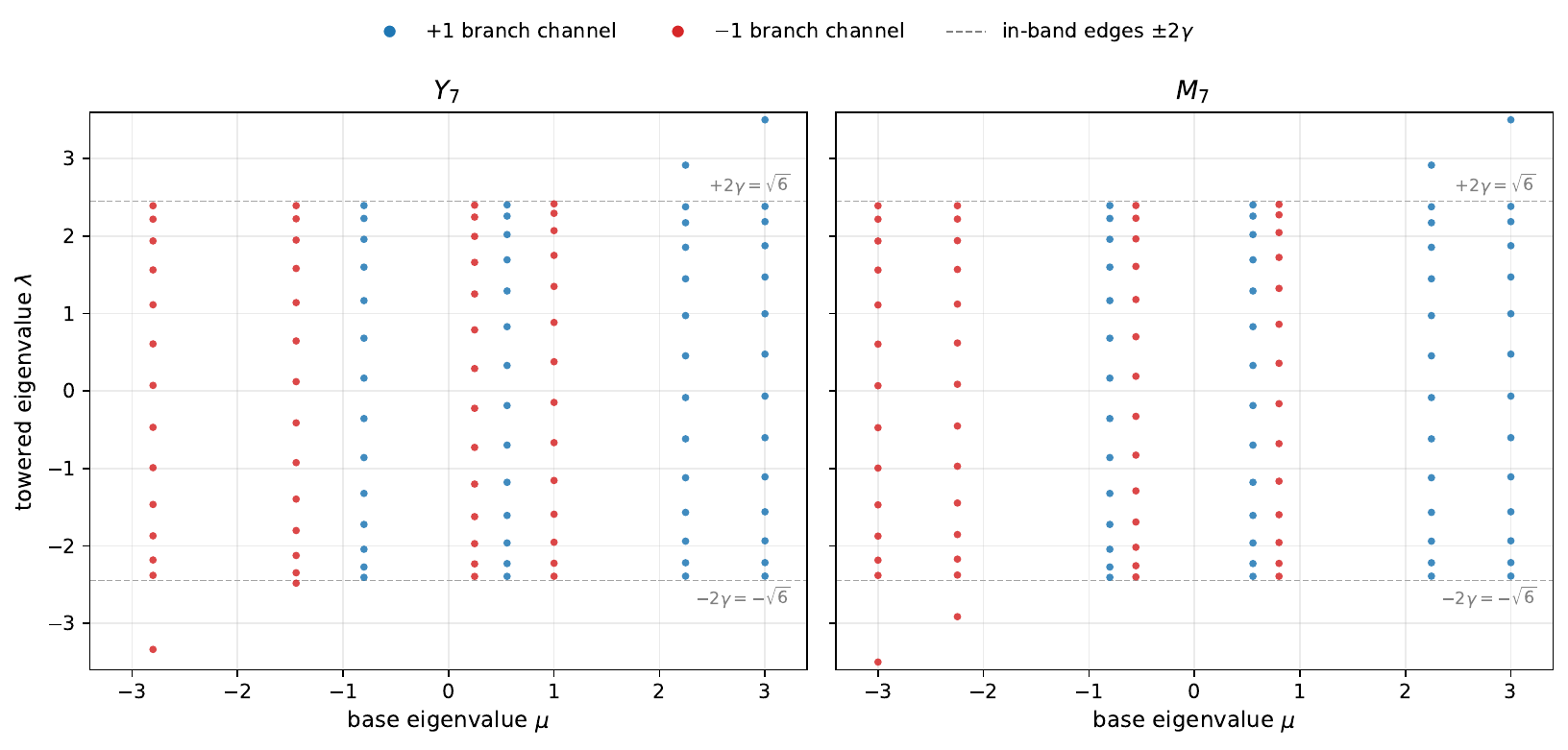}
\caption{Towered spectra of $Y_7$ (left) and $M_7$ (right) at $c=2$, $d=3$, $L = n{-}1 = 13$, visualised through the block decomposition $A_\tower \cong \bigoplus_j H^{(\mu_j)}$ of Proposition~\ref{prop:block}. Each column at base eigenvalue $\mu$ is the spectrum of the path Hamiltonian $H^{(\mu)}$ (Definition~\ref{def:path_ham}), comprising $L+1 = 14$ towered eigenvalues. Colour encodes the branch of $\mu$ in the unified form of Proposition~\ref{prop:unified_evals}: blue for $+1$-branch channels, red for $-1$-branch channels. The two panels share their blue columns exactly (the $+1$ branch coincides between $Y_m$ and $M_m$); the red columns of $M_7$ are shifted by the half-step $\pi/m = \pi/7$ in the cosine argument relative to those of $Y_7$, with the leftmost $M_7$ red column reaching $\mu = -3$ (whose channel produces an out-of-band eigenvalue near $\lambda \approx -3.5$, reflecting the bipartiteness of $M_7$). Dashed lines mark the in-band edges $\pm 2\gamma = \pm\sqrt{6}$; eigenvalues outside this band come from the analytic continuation of the secular equation to the hyperbolic branch (Section~\ref{sec:inband}).}
\label{fig:towered_spectrum}
\end{figure}

\FloatBarrier


\section{The secular equation}\label{sec:secular}

We now determine the eigenvalues of $H^{(\mu)}$ explicitly, using the connection to Chebyshev polynomials of the second kind.

\subsection{Recurrence relations and Chebyshev polynomials}\label{sec:recurrence}

The eigenvalue equation $H^{(\mu)}\ket{\phi}=\lambda\ket{\phi}$ yields three conditions on the amplitudes $\phi_\ell$. At the top we have the \emph{top condition}
\begin{equation}\label{eq:root_bc}
\lambda\phi_0 = \gamma\,\phi_1;
\end{equation}
in the interior we have the \emph{bulk recurrence}
\begin{equation}\label{eq:bulk}
\lambda\phi_\ell = \gamma\,\phi_{\ell-1}+\gamma\,\phi_{\ell+1},\qquad 1\le\ell\le L-1;
\end{equation}
and at the bottom we have the \emph{bottom condition}
\begin{equation}\label{eq:foot_bc}
\lambda\phi_L = \gamma\,\phi_{L-1}+\mu\,\phi_L.
\end{equation}
Setting $x=\lambda/(2\gamma)$, the bulk recurrence becomes $2x\,\phi_\ell = \phi_{\ell-1}+\phi_{\ell+1}$, the defining recurrence for the Chebyshev polynomials of the second kind: $U_0(x)=1$, $U_1(x)=2x$, $U_{\ell+1}(x)=2x\,U_\ell(x)-U_{\ell-1}(x)$.

\begin{lemma}[Path amplitudes]\label{lem:amplitudes}
The sequence $\phi_\ell=U_\ell(x)$ satisfies both the top condition~\eqref{eq:root_bc} and the bulk recurrence~\eqref{eq:bulk}, with $\phi_0=U_0(x)=1$ at the top for all~$\lambda$.
\end{lemma}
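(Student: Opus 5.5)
The plan is a direct verification against the definition of the Chebyshev polynomials of the second kind, after dividing each equation by $\gamma$. We work throughout with the normalized variable $x=\lambda/(2\gamma)$, so that $\lambda=2\gamma x$. This substitution is legitimate because $\gamma=\sqrt{d/c}>0$. We then check the three claims in turn: the top value, the top condition~\eqref{eq:root_bc}, and the bulk recurrence~\eqref{eq:bulk}. The bottom condition~\eqref{eq:foot_bc} is not part of the statement; it is what later produces the secular equation.

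\emph{Top value.} The initial condition $U_0(x)=1$ holds identically in $x$, so $\phi_0=1$ for every $\lambda$. This normalization is allowed because any solution of the top condition and bulk recurrence is fixed by $\phi_0$. If $\phi_0=0$, then the top condition forces $\phi_1=0$, and the recurrence then forces every $\phi_\ell=0$. Consequently no eigenvector vanishes at the top, and fixing $\phi_0=1$ loses no generality. We will note this in passing, since it will be useful for the top weights later.

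\emph{Top condition.} We need $\lambda\phi_0=\gamma\phi_1$. Here $\gamma\phi_1=\gamma U_1(x)=\gamma\cdot 2x=\lambda=\lambda U_0(x)=\lambda\phi_0$, so the condition holds. Equivalently, one can introduce a fictitious $\phi_{-1}=U_{-1}(x)=0$, which is consistent with the recurrence run backwards. The top condition is then just the bulk recurrence at $\ell=0$, reflecting the fact that the top vertex has no neighbour above it.

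\emph{Bulk recurrence.} For $1\le\ell\le L-1$, we divide~\eqref{eq:bulk} by $\gamma$ and obtain $2x\phi_\ell=\phi_{\ell-1}+\phi_{\ell+1}$. This is exactly the defining three-term relation $U_{\ell+1}=2xU_\ell-U_{\ell-1}$, rearranged. So the argument requires no real work; the only point needing care is that $\gamma\neq0$, which is what makes the rescaling valid.
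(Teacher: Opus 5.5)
Your proposal is correct and follows the paper's proof: check $\phi_0=U_0(x)=1$, verify the top condition from $\gamma U_1(x)=2\gamma x=\lambda$, and read the bulk recurrence off the Chebyshev three-term relation with $\lambda=2\gamma x$. Your extra remark that $\phi_0\neq 0$ for any nonzero solution (because $\gamma\neq 0$) is valid, and it is implicitly what justifies the normalization later used for the top weights.
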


\begin{proof}
For the top condition, $\phi_0 = U_0(x) = 1$ and $\phi_1 = U_1(x) = 2x = \lambda/\gamma$, so $\gamma\,\phi_1 = \lambda = \lambda\,\phi_0$, verifying~\eqref{eq:root_bc}. For the bulk recurrence, the Chebyshev defining identity $U_{\ell+1}(x) = 2x\,U_\ell(x) - U_{\ell-1}(x)$, with $x = \lambda/(2\gamma)$, becomes $\gamma\,U_{\ell+1}(x) = \lambda\,U_\ell(x) - \gamma\,U_{\ell-1}(x)$, i.e., $\lambda\,\phi_\ell = \gamma\,\phi_{\ell-1} + \gamma\,\phi_{\ell+1}$, which is~\eqref{eq:bulk}.
\end{proof}

\begin{theorem}[Secular equation]\label{thm:secular}
The sequence $\phi_\ell=U_\ell(x)$ satisfies the bottom condition~\eqref{eq:foot_bc} if and only if $\lambda$ is a root of
\begin{equation}\label{eq:secular}
\boxed{\;\gamma\,U_{L+1}(x) = \mu\,U_L(x),\qquad x=\frac{\lambda}{2\gamma}.\;}
\end{equation}
\end{theorem}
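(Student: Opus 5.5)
The plan is to substitute $\phi_\ell=U_\ell(x)$ directly into the bottom condition~\eqref{eq:foot_bc} and use the Chebyshev recurrence to eliminate $U_{L-1}$. By Lemma~\ref{lem:amplitudes} the top condition and bulk recurrence already hold for every $\lambda$, so the only remaining constraint is~\eqref{eq:foot_bc}. With $\phi_L=U_L(x)$ and $\phi_{L-1}=U_{L-1}(x)$, the bottom condition reads
\[
\lambda\,U_L(x) = \gamma\,U_{L-1}(x) + \mu\,U_L(x).
\]

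Next I rewrite $\lambda = 2\gamma x$ and apply the defining identity $U_{L+1}(x)=2x\,U_L(x)-U_{L-1}(x)$. This identity holds for all $L\ge 0$ if one uses the standard convention $U_{-1}=0$, which covers the degenerate case $L=0$ as well. Multiplied by $\gamma$, it gives $\gamma\,U_{L-1}(x) = 2\gamma x\,U_L(x) - \gamma\,U_{L+1}(x) = \lambda U_L(x) - \gamma U_{L+1}(x)$. Substituting this into the displayed equation cancels the $\lambda U_L(x)$ terms, and the bottom condition becomes equivalent to $\gamma\,U_{L+1}(x) = \mu\,U_L(x)$.

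Each step is a reversible algebraic identity, so the two conditions are equivalent and both directions of the ``if and only if'' follow at once. The one point worth a remark is that $x=\lambda/(2\gamma)$ is well defined because $\gamma=\sqrt{d/c}>0$. One may also note, as a consistency check, that $\phi_0=1\neq0$, so the vector $(U_\ell(x))_\ell$ is nonzero. Hence roots of~\eqref{eq:secular} are exactly the eigenvalues of $H^{(\mu)}$: any eigenvector has $\phi_0\neq 0$, since otherwise the top condition and recurrence would force $\phi\equiv0$, and normalizing $\phi_0=1$ forces $\phi_\ell=U_\ell(x)$.

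There is no genuine obstacle here. The only care needed is the index bookkeeping at $\ell=L$, in particular the edge case $L=0$, where the top and bottom conditions coincide as $\lambda\phi_0=\mu\phi_0$ and one needs the convention $U_{-1}=0$.
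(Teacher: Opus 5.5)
Your proposal is correct and matches the paper's proof essentially step for step: substitute $\phi_\ell=U_\ell(x)$ into the bottom condition, use the Chebyshev recurrence at index $L$ to eliminate $U_{L-1}(x)$, cancel $\lambda U_L(x)$, and note that every step is reversible. Your extra remarks are also sound but not needed here: the $U_{-1}=0$ convention only matters for $L=0$, which the paper excludes since $L\ge 1$, and the observation that every eigenvector has $\phi_0\neq 0$, so that the roots are exactly the eigenvalues of $H^{(\mu)}$, goes beyond what this theorem asserts.
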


\begin{proof}
By Lemma~\ref{lem:amplitudes}, the sequence $\phi_\ell = U_\ell(x)$ satisfies the top condition~\eqref{eq:root_bc} and the bulk recurrence~\eqref{eq:bulk} for every $\lambda \in \R$, so it suffices to determine for which $\lambda$ this sequence additionally satisfies the bottom condition~\eqref{eq:foot_bc}. Substituting $\phi_\ell = U_\ell(x)$ into~\eqref{eq:foot_bc} gives
\[
\lambda\,U_L(x) \;=\; \gamma\,U_{L-1}(x) + \mu\,U_L(x),
\]
which rearranges to $(\lambda - \mu)\,U_L(x) = \gamma\,U_{L-1}(x)$. The Chebyshev defining recurrence at index~$L$, namely $U_{L+1}(x) = 2x\,U_L(x) - U_{L-1}(x)$, yields $U_{L-1}(x) = 2x\,U_L(x) - U_{L+1}(x)$. Substituting and using $2\gamma x = \lambda$,
\[
(\lambda - \mu)\,U_L(x) \;=\; \gamma\bigl(2x\,U_L(x) - U_{L+1}(x)\bigr) \;=\; \lambda\,U_L(x) - \gamma\,U_{L+1}(x).
\]
Cancelling $\lambda\,U_L(x)$ from both sides yields $\gamma\,U_{L+1}(x) = \mu\,U_L(x)$, which is the secular equation~\eqref{eq:secular}. Each step is reversible, so the bottom condition holds if and only if $\lambda$ is a root of the secular equation.
\end{proof}

\begin{proposition}[Root count and simplicity]\label{prop:roots}
For any $\mu\in\R$, the secular equation has exactly $L{+}1$ distinct real roots. This follows from the classical result~\cite{golub2013matrix} that a symmetric tridiagonal matrix with all nonzero off-diagonal entries has simple eigenvalues.
\end{proposition}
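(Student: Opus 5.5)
The plan is to identify the roots of the secular equation~\eqref{eq:secular} with the eigenvalues of $H^{(\mu)}$. Then we count them using the structure of $H^{(\mu)}$ as a Jacobi matrix.

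\emph{Step 1: eigenvalue–root correspondence.} Let $\lambda$ be an eigenvalue of $H^{(\mu)}$ with eigenvector $\phi$. Then $\phi_0\neq 0$. Otherwise the top condition~\eqref{eq:root_bc} gives $\phi_1=0$ (since $\gamma\neq0$), and the bulk recurrence~\eqref{eq:bulk} propagates this to $\phi_\ell=0$ for all $\ell$. Normalise $\phi_0=1$. The top condition and bulk recurrence determine $\phi_1,\dots,\phi_L$ uniquely, and by Lemma~\ref{lem:amplitudes} they give $\phi_\ell=U_\ell(x)$. The bottom condition~\eqref{eq:foot_bc} then holds, so by Theorem~\ref{thm:secular} $\lambda$ is a root. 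Conversely, if $\lambda$ is a root, then $(U_\ell(x))_\ell$ is nonzero because $U_0=1$, and it satisfies all three conditions, so it is an eigenvector. Hence the real roots of~\eqref{eq:secular} are exactly the eigenvalues of $H^{(\mu)}$. All such $\lambda$ are real, since $H^{(\mu)}$ is real symmetric. This also shows each eigenspace is one-dimensional, because eigenvectors are determined up to scale by $\phi_0$.

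\emph{Step 2: counting.} One-dimensional eigenspaces together with diagonalisability of the symmetric matrix $H^{(\mu)}$ give $L+1$ distinct eigenvalues. This is the cited classical result for symmetric tridiagonal matrices with nonzero off-diagonals. As a cross-check that the equation has no spurious extra roots, note that $\gamma U_{L+1}(x)-\mu U_L(x)$ is a polynomial in $\lambda$ of degree exactly $L+1$, with leading coefficient $\gamma(2x)^{L+1}\to\gamma^{-L}\lambda^{L+1}$. Indeed, $\gamma^{L}\,(\gamma U_{L+1}-\mu U_L)$ equals $\det(\lambda I-H^{(\mu)})$. This identity can be verified by cofactor expansion along the last row, using the three-term recurrence of the leading principal minors of the path block. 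The polynomial therefore has at most $L+1$ roots, and all of them are real eigenvalues.

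\emph{Main obstacle.} The only delicate point is that "roots" must be read in the variable $\lambda$ over $\R$, including $|x|>1$, where the Chebyshev form continues analytically. Using the polynomial identity avoids any trigonometric parametrisation. So I would present the determinant identity as the core step, and the simplicity argument via $\phi_0\neq0$ as the way to get distinctness.
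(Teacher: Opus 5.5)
Your proposal is correct and follows the paper's route. The paper simply cites the classical fact that a symmetric tridiagonal matrix with nonzero off-diagonals has simple eigenvalues, and your $\phi_0\neq 0$ argument is the standard proof of that fact; you also make explicit the identification of the real roots of the secular equation with the eigenvalues of $H^{(\mu)}$, via $\det(\lambda I-H^{(\mu)})=\gamma^{L}\bigl(\gamma U_{L+1}(x)-\mu U_L(x)\bigr)$, which the paper leaves implicit because Theorem~\ref{thm:secular} only characterises when the particular sequence $U_\ell(x)$ satisfies the bottom condition.
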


\subsection{In-band and out-of-band eigenvalues}\label{sec:inband}

The spectral band of the infinite weighted path is $[- 2\gamma,2\gamma]$, corresponding to $|x|<1$. For in-band eigenvalues, the trigonometric parametrization $x=\cos\theta$ with $\theta\in(0,\pi)$ gives $U_\ell(\cos\theta) = \sin((\ell{+}1)\theta)/\sin\theta$, the standing waves along the path. The secular equation becomes
\begin{equation}\label{eq:secular_trig}
\gamma\,\sin\bigl((L{+}2)\theta\bigr) = \mu\,\sin\bigl((L{+}1)\theta\bigr),
\end{equation}
which is crucial for the efficient eigenvalue computation described in Section~\ref{sec:efficient}.

Out-of-band eigenvalues ($|x|>1$) correspond to modes exponentially localized near the bottom, with top weights that decay exponentially in~$L$. These contribute negligibly to the quantum walk dynamics and can be safely excluded from the computation.


\section{Top weights and the return amplitude}\label{sec:root_weight}

Having determined the eigenvalues of $H^{(\mu)}$ via the secular equation, we now compute their spectral weights at the top of each tower and assemble the return amplitude that drives the quantum walk.

\subsection{The top weight formula}

The spectral weight of an eigenvalue at the top is the squared overlap of the corresponding normalised eigenstate with the top basis vector $\ket{v,0}$.

\begin{proposition}[Top weight]\label{prop:root_weight}
The spectral weight of eigenvalue $\lambda_{i,k}$ at the top $(v,0)$ is
\begin{equation}\label{eq:root_weight}
\boxed{\;W_{i,k}(v) = \frac{p_i(v)}{\displaystyle\sum_{\ell=0}^L U_\ell(x_{i,k})^2},\;}
\end{equation}
where $x_{i,k}=\lambda_{i,k}/(2\gamma)$ and $p_i(v)=\sum_{s=1}^{m_i}|\braket{v}{\alpha_{i,s}}|^2$. For vertex-transitive $G$: $p_i(v)=m_i/n$ for all~$v$.
\end{proposition}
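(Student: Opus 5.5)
The plan is to combine the block diagonalization of Proposition~\ref{prop:block} with the explicit eigenvectors from Lemma~\ref{lem:amplitudes} and Theorem~\ref{thm:secular}.

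Index the distinct eigenvalues of $A$ as $\mu_i$, each with multiplicity $m_i$ and orthonormal eigenbasis $\ket{\alpha_{i,1}},\ldots,\ket{\alpha_{i,m_i}}$. Fix a root $\lambda_{i,k}$ of the secular equation~\eqref{eq:secular} with $\mu=\mu_i$. By Theorem~\ref{thm:secular}, the vector $\phi=(U_\ell(x_{i,k}))_{\ell=0}^L$ is an eigenvector of $H^{(\mu_i)}$. It is nonzero because $\phi_0=U_0=1$.

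By Proposition~\ref{prop:roots}, the eigenvalues of $H^{(\mu_i)}$ are simple. Hence the normalised eigenvector is
\[
\ket{\hat\phi_{i,k}}=\phi\Big/\Bigl(\textstyle\sum_\ell U_\ell(x_{i,k})^2\Bigr)^{1/2}.
\]
By Proposition~\ref{prop:block}, the eigenspace of $A_\tower$ attached to this root is spanned by $\ket{\alpha_{i,s}}\otimes\ket{\hat\phi_{i,k}}$ for $s=1,\ldots,m_i$.

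One subtlety needs care. Two different base eigenvalues $\mu_i\neq\mu_{i'}$ could in principle share a tower eigenvalue $\lambda$. To avoid this issue, I would define $W_{i,k}(v)$ as the weight of the spectral projector restricted to the block $\bigoplus_s\cH_{i,s}$, that is, the weight attached to the pair $(i,k)$. This block-level weight is exactly what enters the return amplitude. Coincident eigenvalues then simply add their weights.

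The weight is the norm squared of the projection of $\ket{v}\otimes\ket{0}$ onto the orthonormal set $\{\ket{\alpha_{i,s}}\otimes\ket{\hat\phi_{i,k}}\}_s$:
\[
\sum_s\bigl|\braket{v}{\alpha_{i,s}}\bigr|^2\,\bigl|\braket{0}{\hat\phi_{i,k}}\bigr|^2 .
\]
Since $\braket{0}{\phi}=U_0=1$, the second factor equals $1/\sum_\ell U_\ell(x_{i,k})^2$. This yields~\eqref{eq:root_weight}. The quantity $p_i(v)=\bra{v}\Pi_i\ket{v}$ is independent of the chosen basis of the eigenspace, where $\Pi_i$ is the spectral projector of $A$ onto $\mu_i$.

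For vertex-transitive $G$, let $\sigma$ be an automorphism with $\sigma v=w$, and let $Q_\sigma$ be the corresponding permutation matrix. Then $Q_\sigma$ commutes with $A$, so it commutes with $\Pi_i$. This gives $\bra{w}\Pi_i\ket{w}=\bra{v}\Pi_i\ket{v}$, so $p_i(v)$ is constant in $v$. Summing over $v$ gives $\mathrm{tr}\,\Pi_i=m_i$, hence $p_i(v)=m_i/n$.

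The main obstacle is the bookkeeping of multiplicities: keeping the base-graph degeneracy $m_i$ separate from possible accidental coincidences of $\lambda$ across different blocks. Once the weight is read at block level, every remaining step is a direct substitution.
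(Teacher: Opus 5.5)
Your proposal is correct and matches the paper's proof: normalise $\ket{\alpha_{i,s}}\otimes\ket{\phi_{i,k}}$, use $\phi_{i,k,0}=U_0=1$ to read off the top overlap $|\braket{v}{\alpha_{i,s}}|^2/\sum_\ell U_\ell(x_{i,k})^2$, and sum over the $m_i$ degenerate base eigenvectors; your automorphism-and-trace argument also proves the vertex-transitive claim $p_i(v)=m_i/n$, which the paper only asserts. The cross-block coincidence you guard against cannot occur, because $U_L$ and $U_{L+1}$ have no common root, so every eigenvalue $\lambda$ of $H^{(\mu)}$ determines $\mu=\gamma U_{L+1}(x)/U_L(x)$ uniquely.
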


\begin{proof}
The normalized eigenstate is $\ket{\hat\Psi_{i,k,s}}=\ket{\alpha_{i,s}}\otimes\ket{\phi_{i,k}}/\|\phi_{i,k}\|$. Since $\phi_{i,k,0}=U_0(x_{i,k})=1$, the squared overlap with the top $\ket{v,0}$ is $|\braket{v}{\alpha_{i,s}}|^2/\|\phi_{i,k}\|^2$. Summing over degenerate states gives~\eqref{eq:root_weight}.
\end{proof}

The denominator admits a closed form that is essential for efficient computation:
\begin{equation}\label{eq:norm_closed}
\sum_{\ell=0}^L U_\ell(\cos\theta)^2 = \frac{(2L{+}3)\sin\theta - \sin\bigl((2L{+}3)\theta\bigr)}{4\sin^3\theta}.
\end{equation}
This follows from the identity $U_\ell(\cos\theta)=\sin((\ell{+}1)\theta)/\sin\theta$ and the summation formula $\sum_{k=1}^M \sin^2(k\theta)=M/2 - \sin((2M{+}1)\theta)/(4\sin\theta) + 1/4$. The closed form evaluates in $O(1)$ per eigenvalue, avoiding the $O(L)$ cost of the Chebyshev recurrence.

\subsection{The return amplitude}

Summing the weighted eigenvalue contributions over all $n(L{+}1)=n^2$ eigenpairs gives the return amplitude as a sum of complex exponentials.

\begin{proposition}[Return amplitude]\label{prop:return}
For a vertex-transitive base graph, the return amplitude at the top is
\begin{equation}\label{eq:signal}
f_G(t) = \bra{v,0}\exp(-iA_\tower t)\ket{v,0}
= \sum_{i=1}^r\sum_{k=1}^{L+1} W_{i,k}\,e^{-i\lambda_{i,k}t},
\end{equation}
with $W_{i,k}=m_i/(n\|\phi_{i,k}\|^2)$. All weights satisfy $W_{i,k}\ge 0$ and $\sum_{i,k}W_{i,k}=1$.
\end{proposition}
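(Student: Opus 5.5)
The plan is to expand the return amplitude in an orthonormal eigenbasis of $A_\tower$ built from the block decomposition of Proposition~\ref{prop:block}, and then read off the weights from Proposition~\ref{prop:root_weight}. Group the eigenvalues of $A$ into distinct values $\mu_1,\ldots,\mu_r$ with multiplicities $m_i$, and choose an orthonormal basis $\{\ket{\alpha_{i,s}}\}_{s=1}^{m_i}$ of each eigenspace. For each $i$, Theorem~\ref{thm:secular} and Proposition~\ref{prop:roots} give $L{+}1$ distinct roots $\lambda_{i,k}$ of the secular equation, each with eigenvector $\phi_{i,k}=(U_\ell(x_{i,k}))_{\ell}$ of $H^{(\mu_i)}$. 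These vectors are mutually orthogonal because they belong to distinct eigenvalues of a symmetric matrix. Hence the vectors $\ket{\alpha_{i,s}}\otimes\ket{\phi_{i,k}}/\|\phi_{i,k}\|$ form an orthonormal basis of $\C^n\otimes\C^{L+1}$; there are $\sum_i m_i(L{+}1)=n(L{+}1)$ of them, and they are orthonormal because the factors are.

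Next, insert the spectral resolution $\exp(-iA_\tower t)=\sum_{i,s,k}e^{-i\lambda_{i,k}t}\proj{\hat\Psi_{i,k,s}}$ and take the diagonal matrix element at $\ket{v,0}$. Each term contributes $e^{-i\lambda_{i,k}t}|\braket{v,0}{\hat\Psi_{i,k,s}}|^2$. Since $\phi_{i,k,0}=U_0=1$ (Lemma~\ref{lem:amplitudes}), this squared overlap equals $|\braket{v}{\alpha_{i,s}}|^2/\|\phi_{i,k}\|^2$. Summing over $s$ gives $W_{i,k}(v)=p_i(v)/\|\phi_{i,k}\|^2$, exactly as in Proposition~\ref{prop:root_weight}, and this yields the displayed double sum.

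The one step needing an argument is $p_i(v)=m_i/n$ for vertex-transitive $G$. First, $p_i(v)=\bra{v}E_i\ket{v}$, where $E_i$ is the spectral projector of $A$ onto the $\mu_i$-eigenspace, which is basis-independent. Second, $E_i$ is a polynomial in $A$, so it commutes with every permutation matrix of an automorphism. Therefore its diagonal is constant on orbits, and by transitivity it is constant on all of $V$. Since $\mathrm{tr}\,E_i=m_i$, each diagonal entry equals $m_i/n$. This gives $W_{i,k}=m_i/(n\|\phi_{i,k}\|^2)$.

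Finally, $W_{i,k}\ge 0$ because it is a ratio of nonnegative numbers with a positive denominator ($\|\phi_{i,k}\|^2\ge U_0^2=1$). Normalization follows by setting $t=0$: $\sum_{i,k}W_{i,k}=\bra{v,0}I\ket{v,0}=1$. Equivalently, it is the completeness relation of the orthonormal basis evaluated at the unit vector $\ket{v,0}$. I expect no real obstacle. The only point needing care is that the eigenbasis is complete, meaning no eigenvectors of $H^{(\mu_i)}$ are missed, and this is guaranteed by the $L{+}1$ distinct roots in Proposition~\ref{prop:roots}.
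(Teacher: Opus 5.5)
Your proof is correct and follows the same route as the paper: the paper obtains this proposition directly from the block eigenbasis $\ket{\alpha_{i,s}}\otimes\ket{\phi_{i,k}}/\|\phi_{i,k}\|$ of Proposition~\ref{prop:block}, together with the top-weight computation of Proposition~\ref{prop:root_weight} (using $\phi_{i,k,0}=1$ and summing over the degenerate index $s$). Your spectral-projector argument for $p_i(v)=m_i/n$ and your $t=0$ normalization check supply steps that the paper only asserts.
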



\section{The quantum algorithm}\label{sec:algorithm}

\subsection{Problem formulation}\label{sec:problem_formulation}

\noindent\textbf{Input:} An apex label $x_0=\iota((u,\epsilon))$ and black-box oracle $\cO_G$ for the spired graph~$G_\spire$.

\noindent\textbf{Promise:} The hidden base graph $G$ is one of $r$ candidates $G_1,\ldots,G_r$, all $d$-regular on $n$ vertices.

\noindent\textbf{Goal:} Identify $G$ with probability $\ge 1-\delta$.

We assume perfect Hamiltonian simulation: given black-box access via the oracle $\cO_G$, the quantum computer can implement $\exp(-iH_\spire t)$ exactly for any time~$t$.  This is a standard idealization: decisions based on a close approximation $\widetilde U\approx U$ remain close to those based on $U$ by standard operator-norm arguments, and we suppress this analysis to focus on the structural features of the problem.

\subsection{The cross-graph spectral test}\label{sec:cross_graph}

Given $r$ candidate base graphs $G_1, \ldots, G_r$, the algorithm estimates the return amplitude $f(t) = \bra{x_0}\exp(-iH_\spire t)\ket{x_0}$ at a classically precomputed optimal time~$t^*$.  The classical precomputation evaluates $f_{G_1}, \ldots, f_{G_r}$ on a fine grid using the spectral pipeline of Section~\ref{sec:efficient}, and chooses
\begin{equation}\label{eq:tstar}
t^* \;=\; \arg\max_{t\in[0,T_{\max}]}\, \min_{j\neq k}|f_{G_j}(t)-f_{G_k}(t)|.
\end{equation}

\paragraph{Quantum estimation via the Hadamard test.}  Let $U = \exp(-iH_\spire t^*)$.  A standard Hadamard test (Figure~\ref{fig:hadamard}) suffices: by Theorem~\ref{thm:restriction}, the algorithm's dynamics on the Krylov subspace of $\ket{x_0}$ coincides with evolution under the towered-graph adjacency matrix $A_\tower$, so the single controlled-$U$ at the precomputed time~$t^*$ already encodes the spectral signature of the hidden base graph.  Repeating $N_\mathrm{rep}$ times in each branch yields an estimator $\widetilde f\in\mathbb{C}$ with $\widetilde f\approx f(t^*)$.  Structurally, this echoes the quantum-feature paradigm of Havl\'i\v{c}ek et~al.~\cite{havlicek19supervised}: the quantum subroutine estimates a single complex number that would be classically expensive to compute, and the subsequent decision is purely classical.

\begin{figure}[ht]
\centering
\begin{quantikz}
\lstick{$\ket{0}$}   & \gate{H} & \ctrl{1} & \gate[style={dashed}]{S^\dagger} & \gate{H} & \meter{} \\
\lstick{$\ket{x_0}$} & \qw      & \gate{U} & \qw                              & \qw      & \qw
\end{quantikz}
\caption{Hadamard-test circuit for $f(t^*)=\bra{x_0}U\ket{x_0}$ with $U=\exp(-iH_\spire t^*)$, implemented via the oracle $\cO_G$.  System register $\ket{x_0}=\ket{\iota(a_u)}$ is the apex label; each shot yields $\pm 1$ with expectation $\mathrm{Re}\,f(t^*)$, or $\mathrm{Im}\,f(t^*)$ if $S^\dagger$ (dashed) is applied.}
\label{fig:hadamard}
\end{figure}

\paragraph{Decision rule.}  Output the candidate whose prediction is closest to the measurement in $\mathbb{C}$:
\[
\hat\jmath \;=\; \arg\min_{j\in\{1,\ldots,r\}} \bigl|f_{G_j}(t^*) - \widetilde f\bigr|.
\]

\begin{remark}[Pair-tournament alternative]\label{rem:pair_tournament}
For $r > 2$, an alternative to the single global $t^*$ is a pairwise tournament that runs a Hadamard test at each pair's optimal $t^*_{j,k} = \arg\max_t |f_{G_j}(t)-f_{G_k}(t)|$ and outputs the candidate that wins all $r-1$ of its contests.  This trades $\binom{r}{2}$ Hadamard-test instances for the freedom to use each pair's own optimal time, and can be more efficient when no single $t^*$ distinguishes all pairs simultaneously well.  We do not pursue this trade-off here.
\end{remark}

\subsection{Measurement budget}\label{sec:budget}

\begin{definition}[Peak distinguishability]\label{def:peak}
The peak distinguishability of $r$ candidate graphs over the time horizon $[0,T_{\max}]$ is
$\Dis(m) = \max_{t\in[0,T_{\max}]}\min_{j\neq k}|f_{G_j}(t)-f_{G_k}(t)|$.
\end{definition}

\begin{theorem}[Measurement budget]\label{thm:budget}
For the cross-graph spectral test of Section~\ref{sec:cross_graph}, success probability $\ge 1-\delta$ is achieved by repeating the Hadamard test
\begin{equation}\label{eq:budget_single}
N_{\mathrm{rep}} \;=\; \left\lceil\frac{2z^2}{\Dis(m)^2}\right\rceil
\end{equation}
times in each of the real and imaginary branches, where $z=\Phi^{-1}(1-\delta/(r-1))$.
\end{theorem}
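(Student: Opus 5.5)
The plan is to reduce correct identification to a concentration bound on the estimator $\widetilde f$, and then take a union bound over the $r-1$ wrong candidates. By Theorem~\ref{thm:restriction}, the apex state evolves inside $\cK$ under $A_\tower$. Hence the Hadamard-test outcome statistics are exactly those of $f(t^*)=f_G(t^*)$, where $G$ is the hidden candidate, say $G=G_{j_0}$. Each real-branch shot is a $\pm1$ random variable $X$ with $\mathbb{E}X=\mathrm{Re}\,f(t^*)$ and $\mathrm{Var}\,X=1-(\mathrm{Re} f)^2\le 1$. The same holds in the imaginary branch. Let $\widetilde f=\bar X+i\bar Y$ be built from $N_{\mathrm{rep}}$ shots in each branch. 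Then $\widetilde f-f(t^*)$ has independent real and imaginary parts, each with variance at most $1/N_{\mathrm{rep}}$.

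Next I reduce the decision rule to one-dimensional tests. Fix a wrong index $k\neq j_0$ and set $\Delta_k=f_{G_k}(t^*)-f_{G_{j_0}}(t^*)$. By the choice~\eqref{eq:tstar} and Definition~\ref{def:peak}, $|\Delta_k|\ge\Dis(m)$. The rule prefers $k$ over $j_0$ only if $|f_{G_k}-\widetilde f|\le|f_{G_{j_0}}-\widetilde f|$. Writing $E=\widetilde f-f_{G_{j_0}}(t^*)$, this condition is equivalent to $\mathrm{Re}(\overline{\Delta_k}E)\ge|\Delta_k|^2/2$, which says the projection of $E$ onto the unit direction $\Delta_k/|\Delta_k|$ is at least $|\Delta_k|/2\ge\Dis(m)/2$. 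This projection $Z_k$ is an average of $N_{\mathrm{rep}}$ independent bounded terms. Its variance is $(\cos^2\phi\,\sigma_R^2+\sin^2\phi\,\sigma_I^2)/N_{\mathrm{rep}}\le 1/N_{\mathrm{rep}}$.

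Then comes the tail bound. Under the Gaussian (CLT) approximation that the statement's use of $\Phi^{-1}$ implies, $\Pr[Z_k\ge \Dis/2]\le 1-\Phi(\Dis\sqrt{N_{\mathrm{rep}}}/2)$. Requiring this to be at most $\delta/(r-1)$ gives $\Dis\sqrt{N_{\mathrm{rep}}}/2\ge z$, that is, $N_{\mathrm{rep}}\ge 4z^2/\Dis^2$. To reach the stated constant $2$, I would sharpen the variance. The ideal is $\mathrm{Var}(Z_k)\le 1/(2N_{\mathrm{rep}})$, obtained by noting that a real shot and an imaginary shot together form one complex sample with total variance $2-|f|^2$. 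In general, though, the projected variance is only bounded by $\max(\sigma_R^2,\sigma_I^2)$. An alternative is to interpret $N_{\mathrm{rep}}$ shots per branch as contributing $2N_{\mathrm{rep}}$ total samples to the projected statistic. I would try this normalization first and verify it matches the authors' convention for counting $z$. A union bound over the $r-1$ wrong candidates then gives failure probability at most $\delta$.

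The main obstacle is this constant and the rigor of the tail bound. The statement is phrased with the normal quantile, so it is only asymptotically exact. A fully rigorous version would replace $z$ by the Hoeffding bound $\sqrt{2\ln((r-1)/\delta)}$. The hardest point is justifying the factor $2$ rather than $4$, since with the loose variance bound $\le 1$ per direction the natural result is $4z^2/\Dis^2$. I would check whether the intended per-shot variance is $1/2$ per direction, as would follow from averaging over an isotropic error. If it is not, I would state the proof with the bound $\mathrm{Var}\le(2-|f|^2)/(2N_{\mathrm{rep}})$ under an isotropy assumption and flag the worst case.
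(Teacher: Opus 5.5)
Your reduction is the same as the paper's. Misclassification toward $G_k$ requires the error $E=\widetilde f-f_{G_{j_0}}(t^*)$ to have a component of at least $|\Delta_k|/2\ge\Dis(m)/2$ along $\Delta_k$. Each branch average has standard deviation at most $1/\sqrt{N_{\mathrm{rep}}}$, and a Gaussian tail plus a union bound over the $r-1$ wrong candidates finishes. Carried out correctly, this gives what you found, $N_{\mathrm{rep}}\ge 4z^2/\Dis(m)^2$. Rigorously, Hoeffding with range-squared sum $4/N_{\mathrm{rep}}$ gives $N_{\mathrm{rep}}\ge 8\ln((r-1)/\delta)/\Dis(m)^2$.

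The step that fails is your attempt to halve this, and neither repair works.
\begin{itemize}
\item The isotropy route splits $\mathbb{E}|E|^2=(2-|f|^2)/N_{\mathrm{rep}}$ evenly, giving $(1-|f|^2/2)/N_{\mathrm{rep}}$ per direction. This equals $1/(2N_{\mathrm{rep}})$ only when $|f(t^*)|=1$, and is essentially $1/N_{\mathrm{rep}}$ when the return amplitude is small. The projected variance $\bigl(\cos^2\phi\,(1-(\mathrm{Re}f)^2)+\sin^2\phi\,(1-(\mathrm{Im}f)^2)\bigr)/N_{\mathrm{rep}}$ is also not isotropic in general.
\item The ``$2N_{\mathrm{rep}}$ samples'' route fails because each shot informs only one coordinate. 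For $\arg\Delta_k=0$ the imaginary-branch shots do not enter $Z_k$ at all, so the effective sample size along $\Delta_k$ is $N_{\mathrm{rep}}$.
\end{itemize}

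In fact the constant $2$ is not attainable from these premises. Whenever $|f_{G_{j_0}}(t^*)|\ll 1$, the projected standard deviation is $\approx 1/\sqrt{N_{\mathrm{rep}}}$ in every direction. So for a wrong candidate with $|\Delta_k|=\Dis(m)$ and $N_{\mathrm{rep}}=2z^2/\Dis(m)^2$, the pairwise error probability is $\approx 1-\Phi(z/\sqrt2)$. This exceeds $\delta/(r-1)$: about $0.12$ versus $0.05$ at $r=2$, $\delta=0.05$.

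The paper's own proof does not deliver the constant either. It bounds the total failure probability by $2(r-1)\Phi(-\Dis(m)\sqrt{N_{\mathrm{rep}}}/2)$ and then asserts the stated formula. But setting that bound below $\delta$ yields $N_{\mathrm{rep}}\ge 4\,[\Phi^{-1}(1-\delta/(2(r-1)))]^2/\Dis(m)^2$, which is larger still than your bound. The extra factor $2$ in front of $\Phi$ is unnecessary for the one-sided projected event you isolated.

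So do not try to force the $2$. Prove the theorem with $\lceil 4z^2/\Dis(m)^2\rceil$ under the same Gaussian approximation the paper uses, or with the Hoeffding constant for a rigorous statement. Only the constant changes. The $\widetilde O(n^2/\log n)$ scaling drawn from it later is unaffected, though the tabulated $N_{\mathrm{rep}}$ values would double.
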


\begin{proof}
The Hadamard test of Section~\ref{sec:cross_graph} returns a complex estimator $\widetilde f$ whose real and imaginary parts are independent unbiased estimators of $\mathrm{Re}\,f(t^*)$ and $\mathrm{Im}\,f(t^*)$, each with Gaussian noise of standard deviation at most $1/\sqrt{N_\mathrm{rep}}$.  Under the true hypothesis $j^*\in\{1,\ldots,r\}$, the true value lies at $f_{G_{j^*}}(t^*)$, separated from each other candidate prediction $f_{G_k}(t^*)$ ($k\neq j^*$) by at least $\Dis(m)$ in $\mathbb{C}$ (by definition of $t^*$).  The decision rule misclassifies only if $\widetilde f$ falls closer to some wrong prediction $f_{G_k}(t^*)$ than to $f_{G_{j^*}}(t^*)$, which for that $k$ requires a noise displacement of magnitude $\ge\Dis(m)/2$.  By the Gaussian tail bound, each such pairwise misclassification event has probability $\le 2\Phi(-\Dis(m)\sqrt{N_\mathrm{rep}}/2)$; a union bound over the $r-1$ wrong candidates gives a total of $\le 2(r-1)\Phi(-\Dis(m)\sqrt{N_\mathrm{rep}}/2)$, and setting this below $\delta$ gives~\eqref{eq:budget_single}.
\end{proof}

\subsection{The Parseval sum and choice of time horizon}\label{sec:horizon_choice}

The difference signal $\Delta f(t)=f_{G_A}(t)-f_{G_B}(t)=\sum_\lambda c_\lambda e^{-i\lambda t}$ is a sum of $\Theta(n^2)$ complex exponentials, where $c_\lambda = W_\lambda^{(G_A)} - W_\lambda^{(G_B)}$ is the difference of top weights at eigenvalue~$\lambda$.

\begin{definition}[Parseval sum]\label{def:parseval}
The \emph{Parseval sum} of two base graphs $G_A, G_B$ is the time-averaged squared distinguishability:
\begin{equation}\label{eq:parseval_def}
\Par^\Delta \;=\; \Par^\Delta_{A,B} \;=\; \lim_{T\to\infty}\frac{1}{T}\int_0^T |\Delta f(t)|^2\,dt.
\end{equation}
\end{definition}

\begin{proposition}[Parseval identity]\label{prop:parseval}
\begin{equation}\label{eq:parseval_sum}
\Par^\Delta \;=\; \sum_\lambda |c_\lambda|^2,
\end{equation}
where $c_\lambda=W_\lambda^{(G_A)}-W_\lambda^{(G_B)}$ is the weight difference at eigenvalue~$\lambda$ (with $c_\lambda=W_\lambda^{(G_A)}$ if $\lambda\notin\sigma(H_{G_B})$, and similarly with the sign reversed).
\end{proposition}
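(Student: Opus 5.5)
We prove the identity by expanding $|\Delta f(t)|^2$ as a finite double sum of oscillating exponentials and time-averaging each term separately. The first step is to put $\Delta f$ into canonical form. By Proposition~\ref{prop:return}, $f_{G_A}$ and $f_{G_B}$ are finite sums of complex exponentials with nonnegative weights. We merge equal frequencies: if several eigenpairs of $A_\tower$ (for instance from different channels $\mu_j$) share the same eigenvalue $\lambda$, we add their weights into a single $W_\lambda$. We then let $\Lambda=\sigma(A_\tower^{(G_A)})\cup\sigma(A_\tower^{(G_B)})$ be the union of the two spectra. Setting $W_\lambda^{(G)}=0$ whenever $\lambda$ is not in the spectrum for $G$, we obtain $\Delta f(t)=\sum_{\lambda\in\Lambda}c_\lambda e^{-i\lambda t}$. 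The sum is finite, the $\lambda$ are pairwise distinct reals, and $c_\lambda$ matches the convention in the statement.

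Next we expand
\[
|\Delta f(t)|^2=\sum_{\lambda,\lambda'\in\Lambda}c_\lambda\overline{c_{\lambda'}}\,e^{-i(\lambda-\lambda')t}.
\]
Because $\Lambda$ is finite, the limit defining $\Par^\Delta$ may be taken term by term. For the diagonal terms $\lambda=\lambda'$, the average of the constant $|c_\lambda|^2$ is $|c_\lambda|^2$. For the off-diagonal terms, set $\omega=\lambda-\lambda'\neq0$. Then
\[
\frac1T\int_0^T e^{-i\omega t}\,dt=\frac{1-e^{-i\omega T}}{i\omega T},
\]
whose modulus is at most $2/(|\omega|T)\to0$. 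Summing the surviving diagonal terms gives \eqref{eq:parseval_sum}.

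The only real subtlety is the merging step. Degenerate eigenvalues, whether repeated within one graph or shared between $G_A$ and $G_B$, must be combined before expanding. If they are not, cross terms at zero frequency survive, and the result would not be a sum of $|c_\lambda|^2$ over distinct $\lambda$. So the merging must be stated explicitly, and $c_\lambda$ must be interpreted as the total weight difference at each distinct frequency. With that convention fixed, the remaining argument is elementary.
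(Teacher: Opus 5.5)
Your proof is correct and follows the paper's own argument. Like the paper, you expand $|\Delta f(t)|^2$ as a finite double sum of exponentials, keep the diagonal terms $|c_\lambda|^2$, and discard the off-diagonal terms using the $O\bigl(1/(|\lambda-\lambda'|T)\bigr)$ bound on their time averages; your explicit merging of coincident frequencies over $\sigma(A_\tower^{(G_A)})\cup\sigma(A_\tower^{(G_B)})$ spells out what the paper leaves implicit in indexing the sum by distinct eigenvalues $\lambda$.
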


\begin{proof}
Expand $|\Delta f(t)|^2 = \sum_{\lambda,\lambda'} c_\lambda\bar c_{\lambda'}e^{-i(\lambda-\lambda')t}$.  The time average over $[0,T]$ gives $|c_\lambda|^2$ for the diagonal terms ($\lambda=\lambda'$) and $O(1/(T|\lambda-\lambda'|))$ for off-diagonal terms, which vanish as $T\to\infty$.
\end{proof}

\begin{proposition}[Peak vs.\ Parseval bound]\label{prop:peak_vs_parseval}
For any time horizon $T_{\max}$ large enough that the Cesàro average is close to its limit,
\begin{equation}\label{eq:peak_parseval}
\Dis(m)^2 \;=\; \max_{t\in[0,T_{\max}]}|\Delta f(t)|^2 \;\ge\; \Par^\Delta\,(1-o(1)).
\end{equation}
\end{proposition}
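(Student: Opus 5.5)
The argument is that a maximum is never smaller than an average. By Definition~\ref{def:peak} with $r=2$, $\Dis(m)^2=\max_{t\in[0,T_{\max}]}|\Delta f(t)|^2$, so the equality in~\eqref{eq:peak_parseval} is immediate. For the inequality we start from
\[
\max_{t\in[0,T_{\max}]}|\Delta f(t)|^2 \;\ge\; \frac{1}{T_{\max}}\int_0^{T_{\max}}|\Delta f(t)|^2\,dt .
\]
It then remains to show that this finite-window Cesàro average equals $\Par^\Delta(1-o(1))$. That is exactly the hypothesis on $T_{\max}$, and we make it quantitative as follows.

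Expand $|\Delta f(t)|^2=\sum_{\lambda,\lambda'}c_\lambda\bar c_{\lambda'}e^{-i(\lambda-\lambda')t}$, as in the proof of Proposition~\ref{prop:parseval}. Integrating over $[0,T_{\max}]$, the diagonal terms give exactly $\sum_\lambda|c_\lambda|^2=\Par^\Delta$. Each off-diagonal term is bounded by $2|c_\lambda||c_{\lambda'}|/(T_{\max}|\lambda-\lambda'|)$. So the error is at most $2\big(\sum_\lambda|c_\lambda|\big)^2/(T_{\max}\,g)$, where $g$ is the minimal gap between distinct eigenvalues of the union spectrum $\sigma(H_{G_A})\cup\sigma(H_{G_B})$. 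This spectrum is finite: by Proposition~\ref{prop:block} and Proposition~\ref{prop:roots} it consists of at most $2n^2$ distinct values, so $g>0$. We also have $\sum_\lambda|c_\lambda|\le\sum W^{(A)}+\sum W^{(B)}=2$ by Proposition~\ref{prop:return}. Hence the average equals $\Par^\Delta - \epsilon$ with $|\epsilon|\le 8/(T_{\max}g)$. Whenever $T_{\max}\gg 1/(g\,\Par^\Delta)$, this is $\Par^\Delta(1-o(1))$, which gives~\eqref{eq:peak_parseval}.

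The main obstacle is making the phrase ``large enough'' concrete. The crude gap bound can be pessimistic: the minimal gap $g$ of the merged spectra may be very small, since the shared $+1$-branch channels of Proposition~\ref{prop:unified_evals} give identical eigenvalues. For those coincident eigenvalues we simply have $c_\lambda=0$ after merging them into a single frequency, so they drop out.

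What remains are near-coincident pairs between the $-1$-branch channels of $Y_m$ and $M_m$. If these make $g$ too small, I would first try a smoothed (Fejér-type) time average. This replaces the $1/(T|\lambda-\lambda'|)$ bound by a faster-decaying kernel. Alternatively, we can group frequencies closer than $1/T_{\max}$ into clusters and note that clustering only increases the diagonal contribution via $|\sum c|^2$ terms. For the statement as posed, however, it suffices to leave $T_{\max}$ implicit, exactly as the hypothesis allows.
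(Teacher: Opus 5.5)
Your proof is correct and follows the paper's argument: the maximum of $|\Delta f|^2$ over $[0,T_{\max}]$ dominates its finite-window time average, which tends to $\Par^\Delta$, and your explicit bound $|\epsilon|\le 8/(T_{\max}g)$ is a valid way to quantify the paper's ``large enough'' hypothesis. One caution about your optional aside: merging near-coincident $Y_m$/$M_m$ frequencies does not increase the diagonal contribution, because such a pair typically has $c_\lambda=W^{(Y)}_\lambda>0$ and $c_{\lambda'}=-W^{(M)}_{\lambda'}<0$, so $|c_\lambda+c_{\lambda'}|^2<|c_\lambda|^2+|c_{\lambda'}|^2$; your main argument does not rely on this.
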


\begin{proof}
The time average of $|\Delta f(t)|^2$ converges to~$\Par^\Delta$, and the time average cannot exceed the maximum.
\end{proof}

\begin{remark}[Multi-candidate generalization]\label{rem:multi_candidate_parseval}
For $r > 2$ candidates, the Parseval analysis above applies to each pair $(G_j, G_k)$ separately, with the per-pair Parseval $\Par^\Delta_{j,k}$ providing the natural per-pair scale; sharp bounds on the global max-min $\Dis(m)$ of Definition~\ref{def:peak} are left for future work.
\end{remark}

The ratio $\Dis(m)^2/\Par^\Delta$ quantifies the \emph{constructive-interference advantage}: by how much the single-time algorithm (measuring at the optimal time~$t^*$) outperforms a strategy based on the time-averaged signal.

Resolving the constructive interference requires a time horizon long enough to separate the relevant frequency differences.  The eigenvalue spacing within a single channel (fixed $\mu$) is $\Theta(1/n)$, while the finest inter-channel spacings can be $\Theta(1/n^2)$, suggesting the \emph{quadratic} time horizon
\begin{equation}\label{eq:quadratic_horizon}
T_{\max} = c\cdot m^2
\end{equation}
with a moderate constant $c\ge 1$. Numerical experiments confirm $c=1$ is both necessary and sufficient: $\Dis(m)$ drops by roughly a factor of four at $c=0.5$ (e.g., from $0.249$ to $0.068$ at $m=8$, in line with the $\Omega(n^2)$ Fourier-uncertainty bound) and saturates to numerical precision for any $c\ge 1$. We adopt $T_{\max}=m^2$ throughout.

\begin{remark}[Classical precomputation cost]
The classical precomputation of $t^*$ requires evaluating the return amplitude at $O(m^2)$ time points, at cost $O(n^2)$ per evaluation (or $O(n^2\log n)$ total using the nonuniform fast Fourier transform (NUFFT)), for a total of $O(n^4)$ operations with the brute-force method or $O(n^2\log^2 n)$ with the NUFFT. Since this is a one-time computation performed before any quantum queries, it does not affect the quantum complexity. The search over $t$ is also trivially parallelizable across time intervals.
\end{remark}


\section{Efficient classical computation}\label{sec:efficient}

The spectral decomposition of Sections~\ref{sec:spectral}--\ref{sec:secular} transforms the problem of evaluating the return amplitude from an exponentially large matrix exponential into a structured computation that can be performed in polynomial time. We describe two complementary implementations: the \emph{direct method}, which constructs and diagonalizes the full towered graph Hamiltonian without invoking any spectral theory, and the \emph{SERF method} (secular equation root-finding), which exploits the full theoretical machinery to scale to large~$m$. Agreement between the two at small~$m$ validates the entire theoretical chain.

\subsection{The computational pipeline}\label{sec:pipeline}

The evaluation of $f_G(t)$ at $T$ time points proceeds in two stages:

\textbf{Stage~1: Spectral data.} For each of the $n$ base-graph eigenvalues $\mu_j$, find the $n$ eigenvalues $\lambda_{j,k}$ of the path Hamiltonian $H^{(\mu_j)}$ and compute the top weights $W_{j,k}$. The total spectral data consists of $n^2$ eigenvalue--weight pairs $\{(\lambda_{j,k}, W_{j,k})\}$.

\textbf{Stage~2: Signal evaluation.} Evaluate $f(t) = \sum_{j,k} W_{j,k}\,e^{-i\lambda_{j,k}t}$ at all $T$ time points.

We have implemented two principal methods, described below.

\subsection{The direct method: full towered graph diagonalization}\label{sec:direct}

The direct method is the \emph{ground-truth} implementation that performs no spectral analysis: it just constructs the $n^2\times n^2$ towered-graph adjacency matrix $A_\tower = A\otimes\proj{L} + I_n\otimes P$ (the tensor-product form of Proposition~\ref{prop:tensor}) and diagonalizes it with a single call to \texttt{numpy.linalg.eigh}. The top weight for eigenstate $\ket{\psi_k}$ is simply
\[
W_k = |\braket{v{=}0,\,\ell{=}0}{\psi_k}|^2,
\]
the squared magnitude of the first component of the $k$-th eigenvector. No channel decomposition, no Chebyshev formulas, no secular equation; just one matrix diagonalization. This makes the direct method the most trustworthy validation tool: any error in the spectral theory of Sections~\ref{sec:spectral}--\ref{sec:secular} would show up as a discrepancy between the direct and SERF results.

The limitation is computational: the matrix has dimension $n(L{+}1) = n^2$, so the cost is $O(n^6)$ and memory is $O(n^4)$. In practice, the direct method is feasible for $m\le 32$ (matrix dimension $4096$, $\sim 30$\,s) and borderline at $m=64$ (dimension $16384$).

\subsection{The SERF method: secular equation root-finding}\label{sec:serf}

The production method exploits the full theoretical machinery. For Stage~1, it bypasses matrix diagonalization entirely by solving the secular equation in $\theta$-space. Setting $x=\cos\theta$ with $\theta\in(0,\pi)$, the secular equation~\eqref{eq:secular} becomes $G(\theta) = \gamma\sin((L{+}2)\theta) - \mu\sin((L{+}1)\theta) = 0$.

The $L{+}1$ in-band roots are bracketed by Cauchy interlacing: the $(L{+}1)\times(L{+}1)$ path Hamiltonian $H^{(\mu)}$ has a leading $L\times L$ principal submatrix whose eigenvalues are $2\gamma\cos(k\pi/(L{+}1))$ for $k=1,\ldots,L$. By the interlacing theorem, each interval $[k\pi/(L{+}1),\,(k{+}1)\pi/(L{+}1)]$ contains at most one eigenvalue. We set up an $(n)\times(L{+}1)$ array of bracket pairs and perform \emph{vectorized bisection}: all $n(L{+}1)$ brackets are bisected simultaneously, with 53 bisection steps achieving full double precision. Top weights are computed via the closed-form norm sum~\eqref{eq:norm_closed} in $\theta$-space. The total cost of Stage~1 is $O(n^2)$ with a small constant.

For Stage~2, we observe that $f(t) = \sum_k W_k\,e^{-i\lambda_k t}$ has the form of a \emph{type-3 nonuniform discrete Fourier transform}. We use the FINUFFT library~\cite{barnett19finufft,finufft_software} (Flatiron Institute Nonuniform Fast Fourier Transform), which computes such sums in $O((N+T)\log(N+T)\cdot\log(1/\varepsilon))$ operations, reducing Stage~2 from $O(n^2 T)$ to $O(n^2\log n)$.

\begin{remark}[Memory-efficient chunking for large $m$]
For $m=4096$ ($n=8192$), the spectral data consists of $\sim 134$ million active eigenvalue--weight pairs per graph pair, and the coarse time grid has $\sim 33.5$ million points. The SERF method addresses this by \emph{chunking} both the time points (during the coarse scan, $\sim 4\times 10^6$ per block) and the eigenvalues (during refinement, $\sim 30\times 10^6$ per block). Since $f(t) = \sum_k W_k e^{-i\lambda_k t}$ is a linear sum, the eigenvalue blocks can be computed independently and accumulated. This reduces peak memory from $\sim 6$~GB to $\sim 1.5$~GB per NUFFT call, enabling execution on free-tier cloud platforms.
\end{remark}

\begin{remark}[Why FINUFFT]
We chose FINUFFT because it is (i) optimized for double precision with guaranteed error bounds, (ii) available as a compiled C++ library with a Python interface (\texttt{pip install finufft}), and (iii) widely used in computational physics and signal processing. The NUFFT acceleration is critical for large~$m$: at $m=2048$, the brute-force signal evaluation would require $\sim 10^{14}$ operations; the NUFFT reduces this to $\sim 10^8$.
\end{remark}

\subsection{Peak finding with refinement}\label{sec:peak_finding}

To locate $\Dis(m) = \max_t|\Delta f(t)|$ accurately despite the large time horizon $T_{\max}=m^2$, we employ a two-stage strategy:

\textbf{Coarse scan.} Evaluate $|\Delta f(t)|$ on a uniform grid with spacing $\delta t = 0.5$ over $[0,T_{\max}]$, yielding $O(m^2)$ grid points. The NUFFT makes this affordable even for $m=4096$ ($\sim 33.5\times 10^6$ grid points). We identify the top~$3$--$5$ candidate peaks.

\textbf{Refinement.} Around each candidate peak, we evaluate $|\Delta f(t)|$ on a fine grid with spacing $\delta t = 0.01$ over a window of width $\pm 10$, then $\pm 5$. This resolves the peak to 4--5 significant digits beyond the coarse-scan estimate. We verified by convergence testing (varying $\delta t$ from 0.5 down to 0.001) that $\delta t = 0.01$ is more than sufficient: the relative change in $\Dis(m)$ between $\delta t=0.01$ and $\delta t=0.001$ is below $10^{-6}$ for all~$m$.

\subsection{Cross-validation}\label{sec:cross_validation}

The direct method (full towered graph, no theory) and the SERF method (root-finding + NUFFT, full theory) were run independently for $m\in\{4,8,16,32\}$. The optimal time~$t^*$ and the Parseval sum~$\Par^\Delta$ agree to full precision at all four sizes (Table~\ref{tab:crossval}). The values of $\Dis(m)$ agree to machine precision at $m=32$ and show small discrepancies at $m\le 16$ due to \emph{out-of-band eigenvalues}: when a base-graph eigenvalue $|\mu_j|> 2\gamma$, the towered Hamiltonian has eigenvalues outside the spectral band $(- 2\gamma,2\gamma)$. The direct method captures these via full diagonalization, while the SERF method's secular equation finds only in-band roots. The out-of-band top weights decay as $e^{-2\kappa L}$ where $\kappa = \mathrm{acosh}(|\mu_j|/(2\gamma))$ and $L=n-1$, so they vanish exponentially with~$m$.

\begin{table}[ht]
\centering
\begin{tabular}{rlll}
\toprule
$m$ & Direct ($\Dis(m)$) & SERF ($\Dis(m)$) & $|\Dis_{\mathrm{dir}} - \Dis_{\mathrm{SERF}}|$\\
\midrule
4 & 0.345224630 & 0.345226626 & $2.0\times 10^{-6}$\\
8 & 0.249073792 & 0.249073613 & $1.8\times 10^{-7}$\\
16 & 0.125623015 & 0.125623267 & $2.5\times 10^{-7}$\\
32 & 0.069249693 & 0.069249693 & $4.0\times 10^{-12}$\\
\bottomrule
\end{tabular}
\caption{Cross-validation of $\Dis(m)$ between the direct method (full towered graph diagonalization) and the SERF method (secular equation root-finding + NUFFT). The discrepancy at $m\le 16$ is due to out-of-band eigenvalues whose top weights decay exponentially with~$n$; by $m=32$ the agreement is at machine precision.}
\label{tab:crossval}
\end{table}

All computations in this paper, including the largest instance $m=5121$ ($n=10242$, ${\sim}210$ million eigenvalue--weight pairs), complete in under 20 minutes on a consumer desktop with a 6-core Intel Core i5 CPU and 32~GB of RAM.


\section{Numerical evidence}\label{sec:numerics}

All computations use $L=n{-}1$ and $\gamma=\sqrt{6}/2$ (corresponding to $d=3$, $D=6$).  Results for $m\le 32$ are cross-validated between the direct method and the SERF method; all results through $m=5121$ are computed by the SERF method.

\subsection{Setup}\label{sec:numerics_setup}

The cross-graph spectral test of Section~\ref{sec:cross_graph} requires precomputing the peak distinguishability
\[
\Dis(m) \;=\; \max_{t\in[0,m^2]}\bigl|f_{Y_m}(t)-f_{M_m}(t)\bigr|
\]
on the quadratic time horizon $T_{\max}=m^2$.  By the Parseval identity~\eqref{eq:parseval_sum}, $\Par^\Delta=\sum_\lambda|c_\lambda|^2$ with $c_\lambda=W_\lambda^{(Y)}-W_\lambda^{(M)}$, and Proposition~\ref{prop:peak_vs_parseval} gives $\Dis(m)^2\ge \Par^\Delta$.  The two quantities $\Dis(m)$ and $\Par^\Delta$ together control the measurement budget through Theorem~\ref{thm:budget}.

\subsection{Scaling}\label{sec:scaling_observations}

We compute the cross-graph peak $\Dis(m)$ and the Parseval sum $\Par^\Delta$ on the quadratic time horizon $T_{\max}=m^2$ at $80$ values of $m$ spanning $m=4$ to $m=5121$: the $11$~doubling pairs $(2^k, 2^k{+}1)$, $k=2,\ldots,12$, supplemented by $9$~geometric mid-pairs $(m, m{+}1)$ at $m\approx \tfrac{3}{2}\cdot 2^k$, $k=2,\ldots,10$, $16$~quarter-octave pairs $(m, m{+}1)$ at $m\approx \tfrac{5}{4}\cdot 2^k$ and $m\approx \tfrac{7}{4}\cdot 2^k$, $k=3,\ldots,10$, three further quarter-octave pairs $(2560, 2561)$ at $m\approx \tfrac{5}{4}\cdot 2^{11}$, $(3584, 3585)$ at $m\approx \tfrac{7}{4}\cdot 2^{11}$, and $(5120, 5121)$ at $m\approx \tfrac{5}{4}\cdot 2^{12}$, and one half-quarter-octave check-point pair $(4608, 4609)$ at $m\approx \tfrac{9}{8}\cdot 2^{12}$.  Table~\ref{tab:Dm} lists the doubling-pair subset; the figure and the fits in this section use all $80$ points.

\begin{table}[ht]
\centering
\begin{tabular}{cccccccr}
\toprule
$m$ & $n$ & $\Dis(m)$ & $\Dis(m)^2$ & $\Par^\Delta$ & $n^2 \Par^\Delta$ & $\Dis(m)^2/\Par^\Delta$ & $N_{\mathrm{rep}}$\\
\midrule
4 & 8 & 0.345227 & $1.19\times 10^{-1}$ & $3.85\times 10^{-2}$ & 2.46 & 3.10 & 46\\
5 & 10 & 0.280534 & $7.87\times 10^{-2}$ & $2.56\times 10^{-2}$ & 2.56 & 3.07 & 69\\
8 & 16 & 0.249074 & $6.20\times 10^{-2}$ & $1.06\times 10^{-2}$ & 2.72 & 5.83 & 88\\
9 & 18 & 0.171994 & $2.96\times 10^{-2}$ & $8.50\times 10^{-3}$ & 2.75 & 3.48 & 183\\
16 & 32 & 0.125623 & $1.58\times 10^{-2}$ & $2.79\times 10^{-3}$ & 2.86 & 5.65 & 343\\
17 & 34 & 0.140713 & $1.98\times 10^{-2}$ & $2.49\times 10^{-3}$ & 2.87 & 7.96 & 274\\
32 & 64 & 0.069250 & $4.80\times 10^{-3}$ & $7.15\times 10^{-4}$ & 2.93 & 6.71 & 1\,129\\
33 & 66 & 0.065811 & $4.33\times 10^{-3}$ & $6.73\times 10^{-4}$ & 2.93 & 6.44 & 1\,250\\
64 & 128 & 0.040552 & $1.64\times 10^{-3}$ & $1.81\times 10^{-4}$ & 2.96 & 9.09 & 3\,292\\
65 & 130 & 0.038529 & $1.48\times 10^{-3}$ & $1.75\times 10^{-4}$ & 2.97 & 8.46 & 3\,646\\
128 & 256 & 0.019818 & $3.93\times 10^{-4}$ & $4.55\times 10^{-5}$ & 2.98 & 8.63 & 13\,781\\
129 & 258 & 0.020014 & $4.01\times 10^{-4}$ & $4.48\times 10^{-5}$ & 2.98 & 8.94 & 13\,511\\
256 & 512 & 0.011070 & $1.23\times 10^{-4}$ & $1.14\times 10^{-5}$ & 2.99 & 10.74 & 44\,164\\
257 & 514 & 0.011506 & $1.32\times 10^{-4}$ & $1.13\times 10^{-5}$ & 2.99 & 11.69 & 40\,878\\
512 & 1024 & 0.005841 & $3.41\times 10^{-5}$ & $2.86\times 10^{-6}$ & 3.00 & 11.94 & 158\,631\\
513 & 1026 & 0.005854 & $3.43\times 10^{-5}$ & $2.85\times 10^{-6}$ & 3.00 & 12.04 & 157\,938\\
1024 & 2048 & 0.003030 & $9.18\times 10^{-6}$ & $7.15\times 10^{-7}$ & 3.00 & 12.85 & 589\,340\\
1025 & 2050 & 0.002933 & $8.60\times 10^{-6}$ & $7.13\times 10^{-7}$ & 3.00 & 12.06 & 629\,125\\
2048 & 4096 & 0.001623 & $2.63\times 10^{-6}$ & $1.79\times 10^{-7}$ & 3.00 & 14.74 & 2\,054\,672\\
2049 & 4098 & 0.001710 & $2.92\times 10^{-6}$ & $1.79\times 10^{-7}$ & 3.00 & 16.38 & 1\,850\,297\\
4096 & 8192 & 0.000891 & $7.93\times 10^{-7}$ & $4.47\times 10^{-8}$ & 3.00 & 17.75 & 6\,822\,742\\
4097 & 8194 & 0.000891 & $7.93\times 10^{-7}$ & $4.47\times 10^{-8}$ & 3.00 & 17.75 & 6\,824\,343\\
\bottomrule
\end{tabular}
\caption{Cross-graph scaling: $\Dis(m)$ and $\Par^\Delta$ for prism vs.\ M\"obius ladder at the $11$~doubling pairs $(2^k, 2^k{+}1)$.  Each pair samples both bipartite phases at the same scale (Corollary~\ref{cor:exactly_one_bipartite}); the resulting parity oscillation in $\Dis(m)^2/\Par^\Delta$ shrinks with $m$ (visible in the rightmost two columns).  The dimensionless quantity $n^2 \Par^\Delta$ converges to $3$, and the constructive-interference factor $\Dis(m)^2/\Par^\Delta$ grows polylogarithmically in~$m$.  An additional $9$~geometric mid-pairs, $18$~quarter-octave pairs, and one half-quarter-octave pair (Section~\ref{sec:scaling_observations}, $56$ further points) are used in Figure~\ref{fig:scaling} and the fits but omitted from the table for compactness.}
\label{tab:Dm}
\end{table}

\begin{figure}[ht]
    \centering
    \includegraphics[width=1.0\textwidth]{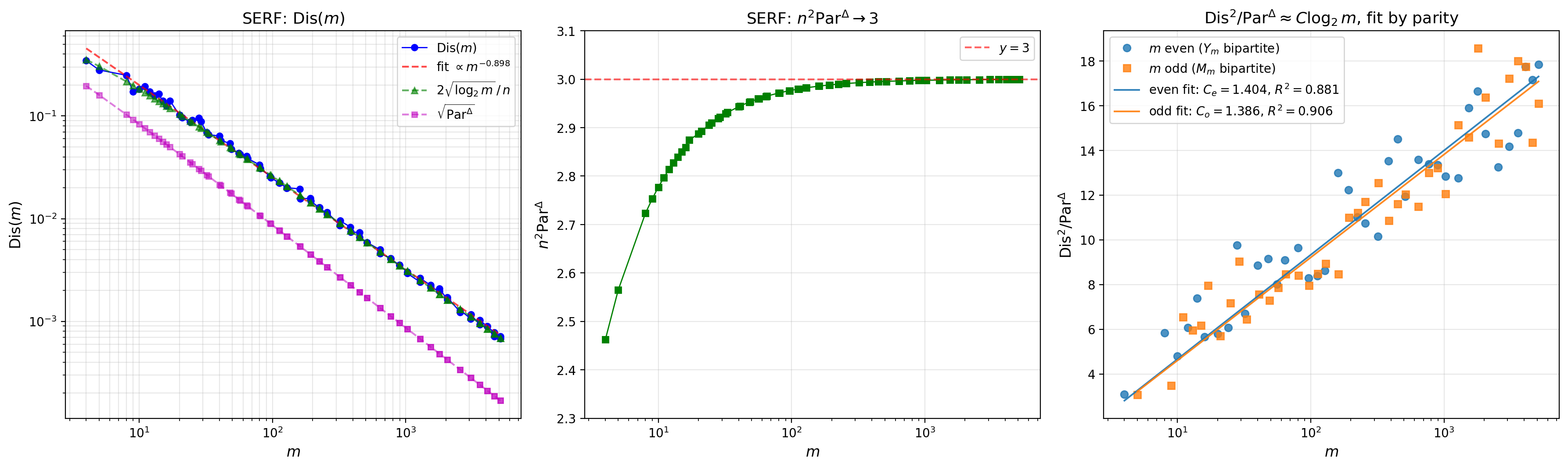}
    \caption{Cross-graph scaling at $80$ values of $m$ spanning $m=4$ to $5121$ in $40$ pairs $(m, m{+}1)$ ($11$ doublings $(2^k,2^k{+}1)$ plus $9$ geometric mid-pairs at $m\approx 1.5\cdot 2^k$ plus $19$ quarter-octave pairs at $m\approx 1.25\cdot 2^k$ and $m\approx 1.75\cdot 2^k$ plus a half-quarter-octave pair at $m\approx 1.125\cdot 2^{12}$). \emph{Left:} log-log plot of $\Dis(m)$ (blue), the prediction $2\sqrt{\log_2 m}/n$ (green triangles), and the Parseval bound $\sqrt{\Par^\Delta}$ (magenta). \emph{Center:} convergence of $n^2 \Par^\Delta\to 3$. \emph{Right:} $\Dis(m)^2/\Par^\Delta$ separated by parity (blue circles, even~$m$, $Y_m$ bipartite; orange squares, odd~$m$, $M_m$ bipartite), with independent linear-in-$\log_2 m$ fits on each parity class (solid lines, fit constants in the legend).}
    \label{fig:scaling}
\end{figure}

\FloatBarrier

Three patterns emerge from Table~\ref{tab:Dm} and Figure~\ref{fig:scaling}.

\paragraph{Parseval sum converges to a clean rational limit.} The dimensionless quantity $n^2 \Par^\Delta$ rises monotonically from $2.46$ at $m=4$ to $3.000$ at $m=4096$, with the deficit $3 - n^2 \Par^\Delta$ halving at each doubling of~$m$.  Equivalently, $\Par^\Delta = (3 - O(1/n))/n^2$, which gives the Parseval-style lower bound
\[
\Dis(m) \;\ge\; \sqrt{\Par^\Delta} \;=\; \Omega(1/n).
\]

\paragraph{The peak follows a closed-form approximation.} $\Dis(m)$ matches $2\sqrt{\log_2 m}/n$ to within $\approx 15\%$ across $m\in[64,5121]$, three orders of magnitude with no systematic drift; the ratio of observed to predicted stays in $[0.93,1.15]$ on this range, with $90\%$ of points within $[0.95,1.11]$.  At $m=4096$, $\Dis=8.906\times 10^{-4}$, agreeing with the prediction to $\approx 5\%$.  The asymptotic identity is therefore
\begin{equation}\label{eq:D_scaling}
\Dis(m) \;\approx\; \frac{2\sqrt{\log_2 m}}{n},
\end{equation}
with a parity-dependent oscillation of up to $\pm 15\%$ at intermediate $m$.  Two single-exponent power-law fits $\Dis(m)\propto m^\alpha$ illustrate this: on the direct-method range $m\le 33$ we obtain $\alpha\approx -0.76$, while on the upper SERF range $m\ge 64$ we obtain $\alpha\approx -0.91$; both are consistent with the closed-form effective exponent $\alpha_\mathrm{eff}(m)=-1+1/(2\ln m)$ evaluated at the geometric mean of the corresponding sub-range ($-0.79$ at $m=11$, $-0.92$ at $m=512$).

\paragraph{Polylogarithmic constructive-interference advantage.}  The constructive-interference factor $\Dis(m)^2/\Par^\Delta$ grows polylogarithmically.  Fitting independently on the even-$m$ and odd-$m$ subsequences (Corollary~\ref{cor:exactly_one_bipartite}; right panel of Figure~\ref{fig:scaling}) gives
\[
\Dis(m)^2/\Par^\Delta \;\approx\; C\,\log_2 m,
\qquad
C_\mathrm{even} \approx 1.40,\;\; C_\mathrm{odd} \approx 1.39,
\]
with $R^2\approx 0.88$ on the even points and $R^2\approx 0.91$ on the odd points (each fit on $40$ data points).  The two parity classes therefore obey the same logarithmic law with constants agreeing to within $\approx 1.3\%$, and the parity oscillation is fluctuation around a common trend rather than a difference between two scaling laws.  We also tried the alternative one-parameter laws $C\sqrt{\log_2 m}$, $C(\log_2 m)^{3/2}$, and $C\log_2 m\cdot\log_2\log_2 m$; the linear-in-$\log_2 m$ form was the best fit on both parity classes by a clear margin (next-best $R^2$ in the range $0.74$--$0.85$).

The intuition is constructive interference: the difference signal $\Delta f(t)$ is a sum of $\Theta(n^2)$ complex exponentials with amplitudes $\Theta(1/n^2)$; at the optimal time $t^*$, $\Theta(n^2/\log n)$ of the spectral phases align constructively, producing a peak that exceeds the root-mean-square level $\sqrt{\Par^\Delta}\sim 1/n$ by a factor of $\sqrt{\log n}$.  Figure~\ref{fig:diagnostic} shows a concrete example at $m=32$: the typical fluctuation level $\sqrt{\Par^\Delta}\approx 0.027$, the peak value is $\Dis\approx 0.069$, a ratio of~$2.6$ matching the prediction $\sqrt{1.4\log_2 32}\approx 2.65$.  The logarithmic growth reflects the increasing difficulty of global phase alignment as the number of components grows.

Combining the empirical scalings $\Par^\Delta = (3-o(1))/n^2$ and $\Dis(m)^2/\Par^\Delta = \Theta(\log_2 m)$ with the measurement budget of Theorem~\ref{thm:budget} gives the sample size
\begin{equation}\label{eq:budget_scaling}
N_{\mathrm{rep}} \;=\; \frac{2z^2}{\Dis(m)^2} \;\approx\; \frac{z^2 n^2}{2\log_2 n} \;=\; \widetilde O\!\left(\frac{n^2}{\log n}\right).
\end{equation}
Each shot evolves the Hamiltonian for time $T_{\max}=m^2=\Theta(n^2)$ on a $(D{+}1)$-sparse oracle, at quantum cost $\widetilde O(n^2)$ per shot by sparse Hamiltonian simulation~\cite{berry15hamiltonian}.  The total quantum cost is therefore $N_\mathrm{rep}\cdot\widetilde O(n^2) = \widetilde O(n^4)$.  We record this as our central conjecture of efficiency.

\begin{conjecture}[Quantum efficiency for prism vs.\ M\"obius]\label{conj:efficiency}
For the prism-vs-M\"obius identification problem, the cross-graph spectral test of Section~\ref{sec:cross_graph} achieves success probability $\ge 1-\delta$ in $\widetilde O(n^2/\log n)$ oracle queries and $\widetilde O(n^4)$ quantum gates.
\end{conjecture}

The conjecture is supported by $80$ data points spanning $m=4$ to $m=5121$ (so $n$ up to~$10242$), with cross-validation between the direct method (theory-free ground truth) and the SERF method on the overlap range $m\le 33$.  Among the candidate one-parameter laws of the form $C(\log m)^p$, $0<p\le 2$, the data favour $p=1$ but do not formally exclude $\sqrt{\log m}$, $(\log m)^{3/2}$, or $\log m\cdot\log\log m$ corrections; the conjectured $\widetilde O(n^4)$ runtime allows for any such polylogarithmic factor.

\begin{figure}[ht]
    \centering
    \includegraphics[width=1.0\textwidth]{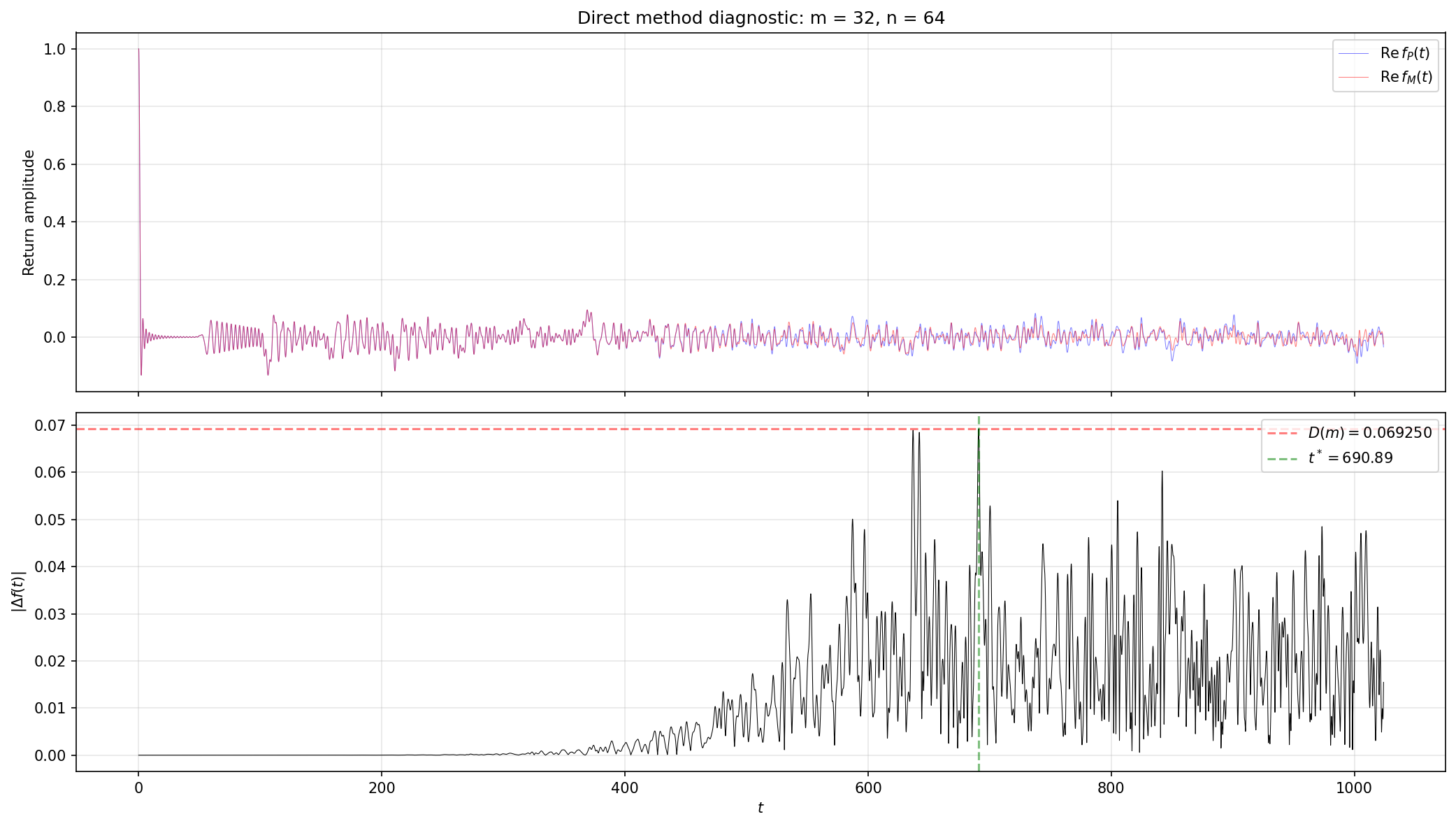}
    \caption{Diagnostic plot for $m=32$ ($n=64$) with $T_{\max}=m^2=1024$, generated by the direct method. \emph{Top:} real parts of the return amplitudes $\mathrm{Re}\,f_Y(t)$ and $\mathrm{Re}\,f_M(t)$. \emph{Bottom:} the distinguishability $|\Delta f(t)|=|f_Y(t)-f_M(t)|$, peaking at $t^*\approx 690.89$ with $\Dis(m)\approx 0.06925$, a factor of $\sqrt{1.4\log_2 32}\approx 2.6$ above the typical fluctuation level $\sqrt{\Par^\Delta}\approx 0.0270$, illustrating the constructive-interference picture.}
    \label{fig:diagnostic}
\end{figure}

\FloatBarrier


\section{Conjecture of classical hardness}\label{sec:hardness}

The quantum algorithm of Sections~\ref{sec:algorithm}--\ref{sec:numerics} identifies a hidden $d$-regular base graph using $O(n^2/\log n)$ queries to the obfuscating oracle. We now turn to the classical side. The construction of $G_\spire$ in Section~\ref{sec:construction} was designed precisely so that the base graph $G$ is hidden from any classical algorithm with only oracle access: spires of height~$L$ separate the apex of the distinguished spire from the lifted-graph layer where the structure of $G$ lives, and the random alternating cycles $C_e$ together with the random labelling $\iota$ destroy all positional information. We conjecture that this hiding is exponentially effective.

This section is organized around the cleanest concrete instance of the question (distinguishing $\cO_{Y_m}$ from $\cO_{M_m}$ for $m$ odd) rather than around the most general statement. Specialising to a single concrete pair sharpens the conjecture: the only difference between $Y_m$ and $M_m$ is the parallel-vs-twisted choice of two closing rail edges, hidden inside an exponentially large spired graph, accessed only through the obfuscating oracle.

\subsection{The distinguishing problem}\label{sec:distinguishing}

Fix two $d$-regular base graphs $G_A, G_B$ on the same vertex set $V$ with the same distinguished vertex $u\in V$, and the security parameter $L\ge 1$ of Section~\ref{sec:oracle}.  Nature samples a bit $b\in\{A, B\}$ uniformly at random, builds the spired graph for $G_b$ with fresh independent random cycles $\mathbf{C}$ and labelling $\iota$, and exposes the resulting obfuscating oracle $\cO_{G_b}$.  A classical algorithm $\cA$, given the seed label $\iota(a_{u})$, makes at most $t$ adaptive oracle queries and outputs a guess $\hat b\in\{A, B\}$.  Its \emph{advantage} is
\[
\Adv(\cA) \;:=\; \Pr[\hat b = b] - \tfrac{1}{2}.
\]

We are interested in how $\Adv(\cA)$ scales with the query budget $t$, the security parameter~$L$, and the graph sizes.

\subsection{The conjecture}\label{sec:conjecture}

\begin{conjecture}[Classical exponential lower bound]\label{conj:hardness}
Let $m$ be odd, $m\ge 5$, and set
\[
G_A \;:=\; Y_m,\qquad G_B \;:=\; M_m,\qquad u \;:=\; \bigl((m{-}1)/2,\, 0\bigr),
\]
in the pair coordinates of Section~\ref{sec:pair_coords}, with security parameter $L\ge 1$ arbitrary.  There exists an absolute constant $C>0$ such that every classical $t$-query algorithm $\cA$ with access to the obfuscating oracle $\cO_{G_b}$ satisfies
\[
\Adv(\cA) \;\le\; C\cdot \frac{t^2}{D^L}.
\]
In particular, achieving constant advantage requires
\[
t \;=\; \Omega\bigl(D^{L/2}\bigr) \;=\; \Omega\bigl(6^{L/2}\bigr),
\]
exponential in the security parameter~$L$.
\end{conjecture}

\begin{remark}[Why $m$ odd and $u$ symmetric]
By Proposition~\ref{prop:placement}, with $m$ odd the distinguished vertex $u=((m{-}1)/2, 0)$ sits at equal graph distance $(m{-}1)/2$ from the two outer-rail closing-edge endpoints $(0, 0)$ and $(m{-}1, 0)$, and at this maximal distance the two ``sides'' of the graph (clockwise and counter-clockwise from~$u$) are interchangeable.  This is the worst case for any classical adversary: the parallel-vs-twisted distinction lives entirely in the differing $4$-cycle, which is sandwiched between two indistinguishable halves of the rail.  For $m$ even this symmetric placement is unavailable (Proposition~\ref{prop:placement}, last clause): the closest closing-edge endpoint sits one step closer than the farther one, breaking the left-right symmetry.  Conjecture~\ref{conj:hardness} is stated for the $m$-odd case where the symmetric placement is genuinely available; the quantum algorithm of Section~\ref{sec:algorithm}, by contrast, is insensitive to the parity of~$m$ and applies at every $m\ge 4$.  We expect the same exponential classical lower bound to hold for $m$~even, with a proof that controls the $O(1)$ rail asymmetry of Proposition~\ref{prop:placement}.
\end{remark}

\begin{remark}[Comparison to the welded-trees problem]
The welded-trees problem of Childs et~al.~\cite{childs03exponential} is structurally close to our setting at $G=K_2$: it shares the same \emph{obfuscation mechanism} (random alternating cycles plus random labelling) and is classically hard via an exponential traversal lower bound.  It is not literally a special case of our identification problem (at $G=K_2$ there is only a single base graph, so nothing to identify); Conjecture~\ref{conj:hardness} extrapolates the welded-trees hardness from traversal to identification.
\end{remark}

\begin{remark}[Generalisation beyond $Y_m$ vs $M_m$]
We expect that exponential quantum-classical separations of the type stated by Conjecture~\ref{conj:hardness} extend to many other pairs of $d$-regular base graphs, plausibly including pairs differing only in a small set of local edges with the distinguished vertex placed far from them.  Identifying the precise structural conditions on a graph pair that enable both the quantum efficiency and the conjectured classical hardness is an open question for future work.  Conjecture~\ref{conj:hardness} is stated only for the prism-vs-M\"obius pair because here the shared common-edge structure (Section~\ref{sec:Y_M}) makes the geometric setup and the symmetric placement of $u$ unambiguous.
\end{remark}


\section{Conclusions and open problems}\label{sec:conclusions}

Conjectures~\ref{conj:efficiency} and~\ref{conj:hardness} together would yield an exponential quantum--classical separation for hidden-base-graph identification in the black-box model.  We close with several open directions stemming from this work.

\paragraph{Other graph families.} Our numerics focus on prism vs.\ M\"obius ladder graphs ($3$-regular, $n=2m$).  Both are vertex-transitive, which simplifies the analysis by ensuring that the return amplitude $f_G(t)$ is the same for every starting vertex.  More generally, the spectral decomposition and the SERF method apply without modification to any pair of $d$-regular base graphs (only the adjacency matrix~$A$ changes), so one could study arbitrary vertex-transitive graphs.  Natural extensions include higher-degree regular graphs ($d\ge 4$), where $D=2d$ changes the path weight $\gamma=\sqrt{D}/2$, and graphs with richer spectral structure.  Ramanujan graphs, which are optimal expanders in the sense that all non-trivial eigenvalues satisfy $|\mu|\le 2\sqrt{d-1}$, are a particularly interesting family: their tight spectral concentration may produce different scaling behavior for $\Dis(m)$ and~$\Par^\Delta$ compared to prisms and ladders, which have eigenvalues spread across the full interval $[-d,d]$.

\paragraph{Alternative expansion structures.} In our construction, the spires (balanced $D$-ary trees of height~$L$) play a dual role.  On the quantum side, a walker starting at the single apex~$a_v$ coherently spreads its amplitude down the spire to all $D^L$ copies of~$v$ at the foundation, so the effective dynamics reduce to the polynomial-dimensional towered graph~$G_\tower$.  On the classical side, the spires confuse any local algorithm: a classical walker cannot distinguish between retreating toward the apex, moving through the lifted edges of the base graph, or entering a spire attached to a neighboring vertex.  The natural question is whether structures other than balanced spires can serve this dual purpose.  Any replacement must (i)~funnel the quantum amplitude from a single entrance vertex into an equal superposition over all copies of a base-graph vertex, producing a low-dimensional invariant subspace, while (ii)~remaining locally indistinguishable from the rest of the graph to a classical observer.  Understanding which alternatives preserve (or break) the quantum speedup is an important structural question.

\paragraph{Identification vs.\ traversal and the exit-finding problem.} Our work differs from the welded-trees problem~\cite{childs03exponential} and its generalizations~\cite{balasubramanian23hierarchical,li23pathfinding} in a fundamental way: those results solve \emph{traversal} problems (reaching a designated exit vertex), whereas our algorithm solves an \emph{identification} problem (determining which base graph is hidden).  A natural problem that bridges the two is \emph{exit-finding}: given oracle access to~$G_\spire$ and the seed apex label, find the remaining apices (the apices corresponding to base-graph vertices other than the seed).  For different base graphs, the number and arrangement of apices differ, and the difficulty of exit-finding may depend on the spectral properties of the hidden graph.

\paragraph{Beyond black-box separations.} Like the welded-trees result of Childs et~al.~\cite{childs03exponential} and the subsequent hierarchical-graph and pathfinding extensions~\cite{balasubramanian23hierarchical,li23pathfinding}, our work is a \emph{black-box separation} (Conjectures~\ref{conj:efficiency} and~\ref{conj:hardness} are relative to the obfuscating oracle).  The open challenge in quantum algorithms is to translate such oracle separations into concrete computational advantages, as the hidden subgroup framework does for the discrete logarithm problem (where Shor's algorithm achieves an exponential speedup for a problem of direct cryptographic relevance).  It remains an important goal to find a natural computational problem whose mathematical structure contains an exponentially large graph with hidden global symmetry of the kind described above.  Detecting that symmetry would translate the oracle separation into a concrete computational advantage.


\section*{Acknowledgements}

The author would like to thank Clive Elphick for many enjoyable joint projects on spectral graph theory over the years, which shaped the perspective behind this work. Thanks also to Andrew M.\ Childs, Matthew Coudron, and Gabriel Coutinho for helpful discussions.


\end{document}